\documentclass[11pt]{article}
\usepackage{amsmath, amsthm, amssymb, bm}
\usepackage[a4paper, total={6in, 10in}]{geometry}
\usepackage{graphicx, psfrag, multicol}
\usepackage{setspace}
\usepackage{natbib}
\usepackage{dsfont}
\usepackage[nottoc]{tocbibind}
\newcommand{\indep}{\perp \!\!\! \perp}
\usepackage{multirow}
\usepackage{multicol}
\usepackage{graphicx}
\usepackage{caption}
\usepackage{subcaption}
\usepackage{lineno}
\usepackage{float}
\usepackage{hyperref}

\newtheorem{proposition}{Proposition}

\title{Disentangling 
spatial interference and spatial confounding biases in causal inference.} 

\author{Isqeel Ogunsola$^{1,2}$ and Olatunji Johnson$^1$}

\date{$^1$Department of Statistics, The University of Manchester, United Kingdom. \\
$^2$Department of Statistics, Federal University of Agriculture, Abeokuta, Nigeria} 

\begin{document}
\maketitle
\onehalfspacing
\begin{center}
\textbf{Abstract}
\end{center}
Spatial interference and spatial confounding are two major issues inhibiting precise causal estimates when dealing with observational spatial data. Moreover, the definition and interpretation of spatial confounding remain arguable in the literature. In this paper, our goal is to provide clarity in a novel way on misconception and issues around spatial confounding from Directed Acyclic Graph (DAG) perspective and to disentangle both direct, indirect spatial confounding and spatial interference based on bias induced on causal estimates. Also, existing analyses of spatial confounding bias typically rely on Normality assumptions for treatments and confounders, assumptions that are often violated in practice. Relaxing these assumptions, we derive analytical expressions for spatial confounding bias under more general distributional settings using Poisson as example . We showed that the choice of spatial weights, the distribution of the treatment, and the magnitude of interference critically determine the extent of bias due to spatial interference. We further demonstrate that direct and indirect spatial confounding can be disentangled, with both the weight matrix and the nature of exposure playing central roles in determining the magnitude of indirect bias. Theoretical results are supported by simulation studies and an application to real-world spatial data. In future, parametric frameworks for concomitantly adjusting for spatial interference, direct and indirect spatial confounding for both direct and mediated effects estimation will be developed.

\vspace{2mm}
\noindent Keywords: Observational spatial data, Bias, Directed Acyclic Graph indirect spatial confounding, spatial weights

\section{Introduction}

\subsection{Spatial interference}

Spatial interference is a phenomenon where treatment or exposure in one location affects the response in another location. This phenomenon is often neglected by researchers, therefore assuming no spatial interference (for example, \citet{papadogeorgou2019}; \citet{hernan2010causal}) for ease of identification and treatment effect estimation \citep{imbens2020}. This assumption, together with the no multiple treatment assumption, is what Rubin called the Stable Unit Treatment Value Assumption \citep{rubin1980}. However, in a spatial setting, no interference assumption may be violated as exposure in one location could affect the outcome in nearby locations due to spatial dependency. Spatial interference can take different forms, most commonly clustered (or partial) and general interference. In clustered interference \citep{sobel2006}, units are partitioned such that only units in the same cluster are assumed to interact or interfere with one another. In contrast, general interference \citep{giffin2020b} places no structural constraints, allowing units to interfere freely. While such interference effects can be identified under certain conditions \citep{tchetgen2012}, the required data are rarely available in real-world applications \citep{giffin2020b}.

Authors have proposed several approaches for estimating spatial interference effects for observational data. An approach based on generalising propensity score to include interference effects as part of the covariates in the propensity score model was proposed by \citet{giffin2020} for estimating direct and spillover effects, but they assumed no confounding effects. They adopted Bayesian spline regression to avoid model dimension issues. The use of spatial instrumental variables was also employed by \citet{giffin2021} in estimating interference effects. Though, they took confounding effect into consideration but finding a suitable spatial instrumental variable and testing/validating some of the assumptions can be a herculean task if not impossible. In the presence of interference, \citet{zigler2020} formulated different possible causal estimands for two different cases of interference. The first case is when the treatment at one unit is different from that at the units where the outcomes are measured. The second case is when there is interaction between the units, i.e units at one location affect the outcome at another location. They utilised the inverse probability weighting methods to estimate a simplified subset of the estimands.  Novel inverse probability weighting estimators were derived by \citet{tchetgen2012} when the spatial interference effect is present. Regression discontinuity was also employed in causal effects estimation in the presence of interference by (\citet{torrione2024}). \citet{torrione2024} proposed a nonparametric estimator for the causal estimands based on distance and obtained the properties. The model assumed conditional unconfoundedness. Estimating treatment effect in the presence of interference is not limited only to observational data. It has also been addressed in randomised trials (e.g. \citet{basse2019};  \citet{athey2018}; \citet{sobel2006}). 

These researchers make attempts to estimate interference effects but failed to quantify and evaluate the impacts of the bias induced on the causal estimates when spatial interference is present but not accounted for and this is one contribution of this study.

\subsection{Spatial confounding}

On the other hand, the definition of spatial confounding is debatable in the literature. It had been meant to be mainly two different things (\citet{papadogeorgou2023}, \citet{urdangarin2023}). Firstly, it was assumed to be a situation where the covariate is correlated with the spatial random effect in a regression model (\citet{dupont2022}, \citet{reich2006}, \citet{khan2022}). Secondly, it had been referred to as a situation in causal inference where the measured covariates do not account for the missing confounder, and the unmeasured confounder has a spatial structure (\citet{gilbert2021}, \citet{papadogeorgou2023}). Meanwhile \citet{gilbert2024} stated that omitted confounder bias, random effect collinearity, regularisation bias and concurvity are all sometimes referred to as spatial confounding in literature. Recently, \citet{khan2023} tried to distinguish all the terms referred to as spatial confounding by classifying the phenomenon into two: referring to the former as analysis model spatial confounding and the latter as data generation spatial confounding. As contribution and further clarity, we set a clear difference between spatial confounding in causal inference and other situations from Directed Acylic Ggraph (DAG) perspective in later section. The two cases stated by  \citet{khan2023} had received numerous attention over years. For the former, among the prominent approaches are: spatial+ method (\citet{dupont2022}), restricted spatial regression method (\citet{reich2006}, \citet{khan2022}), and Transformed Gaussian Markov Random Field (TGMRF) approach (\citet{banerjee2014} and \cite{prates2015}) to mention a few. These methods are based on replacement of the spatial random effects with another forms of random effects with the aim of removing collinearity between the covariates and the variance-covariance matrix of the new random effects. Authors have also compared the performances of some of these methods and determined which produces the best fixed effect estimates (under certain conditions) in the presence of spatial confounders. For further details about these comparisons, see \citet{urdangarin2023}. However, the main focus here is the latter, i.e. spatial confounding as it relates to causal inference. The assumption of no confounder in causal inference does not hold for observational data and becomes complicated in a spatial setting. 

Several researchers have also proposed methods to mitigate this challenge in causal inference. Recently, \citet{gilbert2021} proposed an approach for causal effect estimation independent of distribution assumption. They use spatial coordinates to adjust for unmeasured confounders with confounder measurability as space function and lack of spatial components in the exposures as two additional conditions. They adopted double machine learning procedure in estimating the treatment effect of interest. An exposure-penalised spline was also proposed by \citet{bobb2022} in reducing spatial confounding effects. They explain the spatial variability of exposure through the degree of spatial smoothing. In some causal inference literature, a spatial random effect is used to represent spatial confounder (e.g. \citet{reich2006}). However, authors have argued that using spatial random effects as a substitute for spatial confounder does not mitigate the problem of unobserved spatial confounding (\citet{schnell2020}; \citet{paciorek2010}. \citet{schnell2020} then developed an affine estimator for addressing spatial confounding based on bias term of commonly used linear least square estimators. The method was implemented using a Bayesian approach. The geoadditive structural equation model was also proposed by (\citet{thaden2018}) in mitigating spatial confounding. Specifically, they quantified spatial effects on both the response and covariates at the same time and also incorporated discrete spatial information in their method development. 
\citet{papadogeorgou2019} employed adjusted distance as a form of including information on spatial proximity of units in propensity scores computation in ameliorating the spatial confounding problem. The study showed that their adjusted approach was better than other methods of spatial information inclusion in propensity score adjustment.  

Due to several misconception and lack of consensus on solution to spatial confounding, authors have also tried to quantify and evaluate the bias so as to suggest possible solutions. \citet{dupont2023} quantified the bias introduced by spatial confounder and stated that this is not negligible when the frequency is high (i.e. weights close to 1), which means strong spatial correlation. \citet{narcisi2024} examined and evaluated spatial confounding bias using the theory of quadratic forms.  Their study revealed that both the marginal covariate variability and the marginal covariance between the confounder and the covariate are key determinants of the magnitude of bias due to confounding. \citet{keller2020} developed an approach based on splines for confounding adjustment using Fourier and wavelet filtering, circumventing the difficulty in spatial scale interpretation posed by the spline degree of freedom. \citet{paciorek2010} also showed that confounding bias depends on the spatial scales of the residuals and covariates. He also deduced that, when the spatial scale of the covariate is smaller than the scale of the unmeasured confounding, fitting a random effects model can help reduce the bias. Meanwhile, none of the existing studies quantifies the bias due to indirect spatial confounding and assumed both to be the same which is not true. Also, quantifying bias due spatial confounding when the treatment and confounder distribution is not normal has also been left handicapped.


\subsection{Clarity on spatial confounding from Directed Acyclic Graph (DAG) perspective.}
The Directed Acylic Graph mostly shortened as DAG is a diagram used in representing and visualizing causal relationships among variables. In Figure \ref{fig DAG diff}, $A, Y $ and $U$ represents the treatment, response and confounder respectively. The nodes represent the variables and it is directed due to the fact that an arrow from one variable, for example  $A$  to $Y$  shows that $A$ has a causal effect on $Y$ and not vice versa. The graph being "acyclic" implies that there are no cycles in the DAG. This means that $A$, for instance cannot affect itself in anyway. Apart from being used in modelling causal relationships, DAGs also make it easier to identify which variable to condition on (i.e. adjusting for variables that block non-causal paths), helps in clarifying causal assumptions and figuring out possible confounders for the exposure and response. Note that the box on $U$ indicate that $U$ is missing and has a spatial structure.  

In this section, we provided further clarity to issues and misconception about spatial confounding from DAG perspective.

\begin{figure}[H]
\begin{subfigure}{0.6\textwidth}
\includegraphics[width=3.6cm]{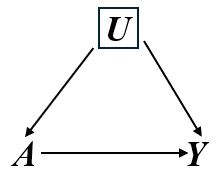}
 \caption{Spatial confounding in causal inference}
 \label{fig DAG causal}
\end{subfigure}
\begin{subfigure}{0.4\textwidth}
\includegraphics[width=3.6cm]{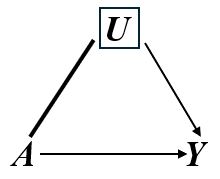}
 \caption{Other spatial confounding}
  \label{fig DAG non causal}
\end{subfigure}
\caption{Spatial confounding}
\label{fig DAG diff}
\end{figure}

Figures \ref{fig DAG causal} and \ref{fig DAG non causal} show the DAGs for spatial confounding in causal inference and non causal inference. The arrows from $U$ to $Y$ and from $A$ to $Y$ in the two figures show that both $U$ and $A$ has causal effects on $Y$. However, in non causal inference there exist relationship between the spatial missing variable $U$ and treatment $A$ but these relationship is not causal indicated by a non-head arrow between $U$ and $A$ as shown in Figure \ref{fig DAG non causal}. On the other hand, the relationship between $U$ and $A$ in Figure \ref{fig DAG causal} is causal, represented with an arrow from $U$ to $A$. Being causal here means that it is not just an association but the conditions/assumptions stated in section 2 are satisfied for $U$ on $A$. For example, in Figure \ref{fig DAG causal}, $U$ must occur before $A$ but this is not necessary in \ref{fig DAG non causal}. In short, the clear difference between spatial confounding in causal inference and non causal inference is that the relationship between the missing spatial variable and $Y$ and $A$ has causal interpretation but this is not so with respect to spatial confounding in non causal inference. 

Meanwhile, spatial confounding can arise in two scenarios: (i) direct spatial confounding, where an unmeasured spatial variable influences both treatment and outcome at the same location, and (ii) indirect spatial confounding, where an unmeasured spatial variable at one location affects outcomes and/or exposure at other locations, often due to spatial interference. For example, in studying the effects of climate change on health, unmeasured pollution that varies spatially can affect both climate change and health at the same location (direct spatial confounding). On the other hand, the pollution in one location could cause climate change and also affect health in nearby locations (indirect spatial confounding). While much attention had been given to quantifying and evaluating bias from direct spatial confounding, the impact of indirect spatial confounding and spatial interference remains underexplored. The subtle difference between direct and indirect spatial confounding is illustrated in Figures \ref{figaa} and \ref{figbb} where $U, A,Y$ are spatial confounder, exposure and responses respectively and the subscripts 1 and 2 indicate locations 1 and 2.

\begin{figure}[H]
 \centering
   \includegraphics[width=8cm]{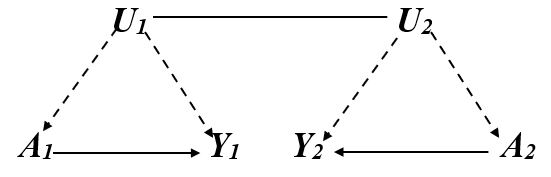}
  \caption{Direct Spatial Confounding}
  \label{figaa}
  
\end{figure}

\begin{figure}[H]
 \centering
   \includegraphics[width=9cm]{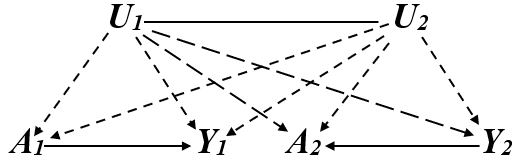}
  \caption{Indirect Spatial Confounding}
    \label{figbb}
\end{figure}

In Figure \ref{figaa} the spatial confounder $U_1$ only affects both the response and exposure at location 1 and $U_2$ affects the responses and exposure at location 2 only. The thick line between $U_1$ and $U_2$ indicates spatial dependency between the confounders. In contrast, in Figure \ref{figbb}, the confounders $U_1$ affects exposures and responses at both locations and similarly for $U_2$. The broken line with arrow from  $U_1$ and  $U_2$ to the responses and exposures indicated that the confounder is unmeasured. 

To this end, our objectives are to: (i) quantify the magnitude of the bias induced by spatial interference and indirect spatial confounding analytical (ii) evaluate the bias in different scenario using simulation study (iii) derive the bias for different sampling distribution of the treatment and confounder 
(iv) propose a simple method for bias correction due to spatial interference and (v) clarify issue on spatial confounding from DAG perspective.
Next, we extended our approach to case
when both spatial interference and spatial confounding are concomitantly present. 

Succinctly put, the contributions of this paper are as follows:
\begin{itemize}
    \item [(1)] This research clarifies issues on spatial confounding from Direct Acyclic Graph (DAG) perspective.
 \item [(2)] This is first paper to the best our knowledge,  to quantify and evaluate bias due to spatial interference in both spatial and non spatial settings in causal inference.
 \item [(3)] It disentangled direct and indirect spatial confounding biases against assumption made about both to be one in existing causal literature and also derived the bias in the presence of indirect spatial confounding, a result not in previously in existence in causal inference literature.
 \item [(4)] This work also derived the bias when the distribution of the treatment and confounder are not normally distributed. Specifically, when both are Poisson distributed - a situation not previously looked into. 
 \item [(5)] Finally, this study showed that when any of the three phenomena (spatial interference, direct and indirect spatial confounding) are neglected, treatment effect estimate are biased and decisions made based on the estimates will not be reliable.   
\end{itemize}

Following the introduction and contributions of this research presented in section 1, the assumptions made for our estimate to be valid causal estimate and analytical framework are given in sections 2 and 3 respectively. We derived the bias under different scenarios in section 4. In section 5, we evaluated the bias using simulation, presented the results and gave real data application in section 6. The discussion, simple suggestion on bias correction due to spatial interference only, limitations of this study and future work are given in section 7. In the last section, we gave conclusion about the findings of the study.

\section{Causal effect identification assumptions}
To ensure that the direct estimate is causal and identifiable, the following assumptions must hold. 

Suppose that $Y, A $, $X$ and are the outcome, exposure and covariates, respectively, and  $ "a" $ represents the level of the treatment when it is given.

\noindent \textbf{Assumption 1 (Positivity) :} $Pr[A = a | X = x] > 0 $, where $f_X (x) > 0$. This means that each of the units in the study has a non-zero probability of being assigned to the exposure.

\noindent \textbf{Assumption 2 (Consistency) :} If $A = a,$ then $Y^a = Y$. Consistency here means that if a unit/subject receives a particular level of exposure/treatment,  the observed response must be equal to the response under that exposure. This implies that treatment should be well defined and uniform across all units/subjects.


\noindent \textbf{Assumption 3 (Exchangeability) :} $Y^a \indep  A | X $. This assumptions means that there should be no significant difference in the baseline characteristics of the groups compared (e.g., for binary case, i.e., the treated and the untreated), so that the difference in the average potential outcomes between the treated group and the untreated groups could be attributed to the exposure effect. In short, this simply means that given observed covariates, treatment should be independent on the potential outcome. This is also referred to as unconfoundeness assumption.  Exchangeabily assumption must hold for observational data to mimic randomisation techniques used in experimental data for valid causal inference.

\noindent \textbf{Assumption 4 (Temporal order):} This assumption gives one clear difference between causation and association. For validity of our causal effetcs estimates, we assumed that the exposure here precedes the outcome and not vice-versa.

Note that, in this section we assumed that no confounding assumption holds but the no interference assumption is violated. In the later section, the unconfoundeness assumption will also be relaxed so that both are violated. 

\section{Analytical framework}
First, we started with a data generating model with non-spatial structure but with spatial treatment given as: 

\begin{equation}
 \textbf{M1} \quad \quad  \quad \quad \quad \quad \quad \quad Y\bf{(s_i)} = \beta_0 + \beta_a A\bf{(s_i)} + \epsilon_i 
\end{equation}

\noindent where $Y\bf{(s_i)}$ are the responses at locations $\bf{s_i} = \bf{s_1}, ..., \bf{s_n} \in \mathbb{R}^2$, $A\bf{(s_i)}$ are the treatment varying spatially at locations $\bf{s_i}$, $\epsilon_i$ are non spatial vector of errors assumed to be normally distributed with mean 0 and variance $\sigma^2$ 
i.e. $\epsilon_i \sim N(0, \sigma^2 ) $ and $\beta_0$ and $\beta_a$ are the intercept and treatment effect of interest respectively. We have dropped the covariate and considered the treatment of interest only just for simplicity.  

Due to the spatial dependency exhibited by the treatments, nearby treatments are likely to interfere with one another and the independence assumption is violated introducing interference effect. We then consider a model with interference but assumed no confounder effects given as:

\begin{equation}
 \textbf{M2} \quad \quad  \quad \quad  \quad \quad  Y\bf{(s_i)} = \beta_0 + \beta_a A\bf{(s_i)} + \beta_{\tilde{a}} \tilde{
 A}\bf{(s_i)}+  \epsilon_i, 
    \label{equation 2}
\end{equation}

where $ \beta_{\tilde{a}}$ is the interference effect, $\tilde{A} = \Psi A = \sum_{i \neq j} \psi_{ij} A_j $ is a function of some weights $\Psi \in \mathbb{R}^{n\times n }$ and treatment $A\bf{(s_i)}$. The weights $\Psi$ is a matrix which represents the level of dependency among the treatment but with $0 $ elements in its diagonal so that no treatment will interact with itself. All other notations take their usual meanings as defined in the previous section.

We further considered the case of spatial response but with interference effects only in our simulation that is when the error is $N(0, \Sigma) $ where $ \Sigma = \sigma^2_{\epsilon} \Omega_{\epsilon}(\omega) +\sigma^2_{\epsilon} I$ and  $\Omega_{\epsilon} (\omega) $ is the spatial correlation function with spatial range parameter $\omega$ Meanwhile, spatial correlation in the error could be due to some unmeasured confounders. As an extension, we considered the model with both spatial interference and spatial confounder. The model is given as: 

\begin{equation}
 \textbf{M3} \quad \quad  \quad \quad    Y\bf{(s_i)} = \beta_0 + \beta_a A\bf{(s_i)} + \beta_{\tilde{a}} \tilde{A}\bf{(s_i)}+ \beta_u U\bf{(s_i)} + \epsilon_i 
    \label{equation 3}
\end{equation}

\noindent where $U\bf{(s_i)}$ is an unknown spatial confounder assumed to follow a normal distribution with mean $0$ and variance-covariance $\sigma^2_{u} \Omega_u$ (violation of this assumption is explored in the later section) and $\beta_u$ is the direct confounding effect. The subscripts $u$ and $\epsilon$ indicate the parameters with respect to the confounder and error, respectively, and we assumed that $\Omega$ is modelled using a Mat\'ern correlation function given as: 

\begin{equation*}
    \Omega (d; \omega, v) = \frac{ 2^{1-v}}{\Gamma (v) } \left(\frac{2 \sqrt{v} d }{\omega}\right)^v K_v \left(\frac{2 \sqrt{v} d }{\omega}\right)
\end{equation*}

\noindent where $\omega$ is the spatial range parameter, $d = |s_i - s_j|$, $v$ is the smoothness parameter and $K_v$ is a Bessel function. Explicitly, we have considered a case where $v = 0.5$ which results into exponential correlation function given as $ \exp{\left(\frac{-d}{\omega}\right)} $.

Lastly and to the best of our knowledge, in causal inference literature on bias quantification and evaluation, researchers have only considered a case where the confounder at one location affects the response at the same location, i.e. the model of the form:

\begin{equation}
 \textbf{M4} \quad \quad  \quad \quad   \quad \quad Y\bf{(s_i)} = \beta_0 + \beta_a A\bf{(s_i)} + \beta_u U\bf{(s_i)} + \epsilon_i 
    \label{equation 4}
\end{equation}

We argued that spatial confounding effects could also be indirect where spatial confounder at one location affects the response at another location. Viewing spatial confounding as omission of spatial variable (causal inference definition), estimates will still be biased if only direct spatial confounding is accounted for and indirect spatial confounding is present.  We showed this theoretically that direct spatial confounding and indirect spatial confounding are also not the same just like spatial interference and direct spatial confounding are not the same. We evaluated this using simulation. In such case, the model (assuming no interference effects) is given as:  

\begin{equation}
 \textbf{M5} \quad \quad  \quad     Y\bf{(s_i)} = \beta_0 + \beta_a A\bf{(s_i)} + \beta_u U\bf{(s_i)} +  \beta_{\tilde{u}} \tilde{U}\bf{(s_i)}+ \epsilon_i 
    \label{equation 5}
\end{equation}

\noindent where $ \beta_{\tilde{u}}$ is the indirect spatial confounding effect, $\tilde{U} = \Phi U = \sum_{i \neq j} \phi_{ij} U_j $ is a function of some weights $\Phi \in \mathbb{R}^{n\times n }$ on spatial confounder $U$. Note that this is different from $\Psi$, the weights on the treatment $A$

We combined all the results and evaluate the bias with the full model i.e, model with treatment effect of interest, interference effect, direct spatial confounding and indirect spatial confounding. The full model is thus given as:
\begin{equation}
    \textbf{M6} \quad \quad Y\bf{(s_i)} = \beta_0 + \beta_a A\bf{(s_i)} + \beta_{\tilde{a}} \tilde{A}\bf{(s_i)} + \beta_u U\bf{(s_i)} +  \beta_{\tilde{u}} \tilde{U}\bf{(s_i)}+ \epsilon_i 
    \label{equation 6}
\end{equation}

\section{Bias derivation}
\subsection{Quantifying bias due to spatial interference}

As stated earlier, the bias when direct spatial confounding is present had been studied in literature (see \citet{paciorek2010}; \citet{page2017estimation}; \citet{dupont2023}; \citet{narcisi2024}; \citet{khan2023}) but study on quantifying and evaluating the bias in the presence of spatial interference and indirect spatial confounding are present in causal inference is limited. This is one of the gaps this study tends to fill. In this section, we derive analytical expression for the bias due to spatial interference for both spatial and non-spatial setting. We adopt the Least Squares method for the non-spatial case and the Generalised Least Squares approach for the spatial case. \\

\begin{proposition}
    Given a model of the form in Equation \ref{equation 2} and assuming only $\beta_a$ is unknown, the bias due to spatial interference in non-spatial and spatial settings are given respectively as $Bias\left(\hat{\beta_a^{*}}|A\right)=  \beta_{\tilde{a}}\frac{A^T \Psi A}{\left(A^{T} A\right)} $ and $Bias\left(\hat{\beta_a}|A\right) =  \beta_{\tilde{a}} \frac{A^T \Omega^{-1} \Psi A} {\left(A^{T} \Omega^{-1}A\right)}$.\\
\end{proposition}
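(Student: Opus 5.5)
The plan is to treat the analyst's working model as the misspecified model \textbf{M1}, which omits the interference term, while the data are actually generated by \textbf{M2} in Equation \ref{equation 2}. Since only $\beta_a$ is unknown, I will regard $\beta_0$ as known (equivalently, work with $Y - \beta_0\mathbf{1}$ or centred data). The fitted model then has the single regressor $A$, so the estimator is a scalar projection. I will compute its conditional expectation given $A$ under the true model, and read off the bias as $E(\hat\beta|A)-\beta_a$.

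For the non-spatial setting I use ordinary least squares:
\[
\hat{\beta_a^{*}} = (A^TA)^{-1}A^T(Y-\beta_0\mathbf{1}).
\]
Substituting the true model $Y-\beta_0\mathbf{1} = \beta_a A + \beta_{\tilde a}\Psi A + \epsilon$, with $\tilde A=\Psi A$, gives
\[
\hat{\beta_a^{*}} = \beta_a + \beta_{\tilde a}\frac{A^T\Psi A}{A^TA} + \frac{A^T\epsilon}{A^TA}.
\]
Conditioning on $A$ and using $E(\epsilon|A)=0$ kills the last term. Here $E(\epsilon|A)=0$ is justified because $\epsilon$ has mean zero and, under Assumption 3, there is no confounding, so the errors are independent of the treatment. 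This leaves the stated bias $\beta_{\tilde a}A^T\Psi A/(A^TA)$.

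For the spatial setting the errors are $N(0,\Omega)$, with $\Omega$ the (scaled) error covariance as in the paragraph following \textbf{M2}. I use generalised least squares:
\[
\hat{\beta_a} = (A^T\Omega^{-1}A)^{-1}A^T\Omega^{-1}(Y-\beta_0\mathbf{1}).
\]
The same substitution gives
\[
\hat{\beta_a} = \beta_a + \beta_{\tilde a}\frac{A^T\Omega^{-1}\Psi A}{A^T\Omega^{-1}A} + \frac{A^T\Omega^{-1}\epsilon}{A^T\Omega^{-1}A}.
\]
The noise term again has conditional mean zero, which yields the second expression. Any scalar variance factor in $\Sigma=\sigma^2\Omega$ cancels between numerator and denominator, so it does not matter whether $\Omega$ or $\Sigma$ is used.

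The computations themselves are routine. The main point needing care is justifying that conditioning on $A$ is legitimate and that the cross term vanishes. This requires that $\Psi$ be a fixed, nonrandom matrix, and that $A$ (hence $\tilde A$) be independent of $\epsilon$, which holds under no confounding. I would also state explicitly that $A^TA>0$ and $A^T\Omega^{-1}A>0$, which holds almost surely since $\Omega$ is positive definite and $A\neq 0$. If $\beta_0$ were instead estimated jointly, the same argument would go through with $A$ replaced by its residual after projecting out the intercept. I would remark on this but follow the proposition's assumption.
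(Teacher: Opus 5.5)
Your proposal is correct and follows essentially the same route as the paper: remove the intercept (the paper assumes centred data), substitute the true model M2 with $\tilde A=\Psi A$ into the OLS and GLS estimators, and take the expectation conditional on $A$ so that the error term drops out, leaving the stated bias. Your extra remarks make explicit what the paper uses only implicitly: a fixed $\Psi$, $E(\epsilon\mid A)=0$ under no confounding, cancellation of the scalar variance factor in $\Omega$, and the caveat about projecting out a jointly estimated intercept.
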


\begin{proof}
   In non spatial setting, consider equation \ref{equation 2} given above as:

\begin{equation*}
    Y\bf{(s_i)} =  \beta_a A\bf{(s_i)} + \beta_{\tilde{a}} \tilde{A}\bf{(s_i)}+  \epsilon_i 
\end{equation*}
Note that the intercept has been excluded in the model assuming that the data is centered. Recall that Bias of $(\hat{\beta_a}) = E\left(\hat{\beta_a}\right) - \beta_a$ and the OLS estimator for $\beta_a$ is given as $\hat{\beta_a} = \left(A^TA\right)^{-1}A^TY$. 

Now; 
\begin{align*}
E\left(\hat{\beta_a^{*}}|A\right) &= \left(A^TA\right)^{-1}A^T\left( \beta_a A\bf{(s_i)} + \beta_{\tilde{a}} E\left(\tilde{A}\bf{(s_i)}|A \right)\right) \\
&= \beta_a \left(A^TA\right)^{-1}A^T A + \beta_{\tilde{a}} \left(A^TA\right)^{-1}A^T E\left(\tilde{A}|A \right) \\
&= \beta_a + \beta_{\tilde{a}} \left(A^{T} A\right)^{-1} A^T \tilde{A} \\
& = \beta_a + \beta_{\tilde{a}} \left(A^{T} A\right)^{-1} A^T \Psi A \\
 &= \beta_a +  \beta_{\tilde{a}}\frac{A^T  \Psi A}{A^{T} A}
\end{align*}
Hence, treatment effect bias due to spatial interference in non-spatial setting is given as:

\begin{equation}
  Bias\left(\hat{\beta_a^{*}}|A\right) =  \beta_{\tilde{a}}\frac{A^T \Psi A}{A^{T} A}  
  \label{SI non spatial}
\end{equation}

\noindent Similarly, in spatial setting, 
\begin{align*}
E\left(\hat{\beta_a}|A\right) &= \left(A^T \Omega^{-1}A\right)^{-1}A^T \Omega^{-1}\left( \beta_a A\bf{(s_i)} + \beta_{\tilde{a}} E\left(\tilde{A}\bf{(s_i)}|A \right)\right) \\
&= \beta_a \left(A^T \Omega^{-1}A\right)^{-1}A^T \Omega^{-1} A + \beta_{\tilde{a}} \left(A^T \Omega^{-1}A\right)^{-1}A^T \Omega^{-1} E \left(\tilde{A}|A \right) \\
&= \beta_a + \beta_{\tilde{a}} \left(A^{T}\Omega^{-1} A\right)^{-1} A^T \Omega^{-1}\tilde{A} \\
& = \beta_a + \beta_{\tilde{a}} \left(A^{T} \Omega^{-1} A\right)^{-1} A^T \Omega^{-1} \Psi A \\
&= \beta_a +  \beta_{\tilde{a}}\frac{A^T \Omega^{-1} \Psi A}{A^{T} \Omega^{-1}A} 
\end{align*}

\noindent Therefore, the bias due to spatial interference in spatial setting is given as:

\begin{equation}
Bias\left(\hat{\beta_a}|A\right) =    \beta_{\tilde{a}}\frac{A^T \Omega^{-1} \Psi A}{A^{T} \Omega^{-1}A} 
  \label{SI  spatial}
\end{equation}
\end{proof}


It can be seen that the bias induced by spatial interference is a function of the distributional properties of $A$, the strength of the interference effects and how the weight matrix, $\Psi$, has been chosen. Hence, in the simulation section, we evaluated the interference effects using two different weights, 
two distributions for the treatment (discrete and continuous) and two different cases for the interference effects (i.e. cases where the interference effect is lower and higher than the treatment effect).

\subsubsection{Choosing the weight matrix, \texorpdfstring{$\Psi$}{Psi}}
As shown and noted in the proceeding section that the bias induced by spatial interference depends heavily on the weight matrix, $\Psi$. Hence, how to choose the weight matrix plays a crucial role in determining the magnitude of the bias. Here, we briefly describe some ways how the weight matrix can be chosen in practice. 

The weight can be contiguity based, that is, chosen based on adjoining neighbours. This could be rook contiguity (when neighbours shared common edge) or queens contiguity (when neighbours are vertical or horizontal to one another). It could also be chosen based on the distance between locations. This distance can be fixed in advanced based on expert opinion, study requirement or convenience sake and units within this region are considered neighbours. The inverse distance can also be used, in this case, the weight decreases as the distance increases.  For example, suppose distance, $ d_{ij} = 20km$ is a pre-fixed distance, the weight between two units $i$ and $j$ can be $w_{ij} = \frac{1}{d_{ij}^p}$ where $p$ is a power parameter usually 1 or 2. A kernel or smooth decay function such as Gaussian or Epanechnikov can also be used in selecting the neighbours ($w_{ij} = e^{(d_{ij}^2/2)}$). This is flexible than the reciprocal of the distance. Also, irrespective of the distance between units, k-Nearest Neighbour (K-NN) can also be used in constructing the weights. Each units is associated with its k-nearest neighbour. In addition, the weight matrix can also be constructed using block based weights. This is useful when units are grouped into regions. The weights are then assigned based on within-group or between-group relationships.  Lastly, the weights can be constructed using criteria that are non-spatial such as shared features like language, or socio-economic status and so on. However, only the k-NN and distance based weights were evaluated in this study. 

\vspace{3mm}
\textbf{Constructing k-NN based weight matrix}

\begin{itemize}

        \item Step 1: For the choice of k (k=1,2,3,4,..), find the $k$ neighbours with the least distance for each of the unit $i$
        
        \item Step 2: For each of the units $i$, assigned weights to the identified $k$ neighbours as 
        
        $$w_{ij} = 
\begin{cases}
\dfrac{1}{k}, & \text{if } j \in \text{k-NN}(i) \\
0, & \text{otherwise}
\end{cases}
$$
  \end{itemize}
 

\vspace{4mm}

\textbf{Constructing the distance based weight matrix}

\begin{itemize}

\item Step 1: The full matrix with pairwise distance between units $i$ and $j$ is constructed. 
   
    \item Step 2: Determine or compute the percentile threshold or cutoff (e.g. 95\%, 50\%) of the distances.
   
    \item Step 3: For each pair $(i,j)$, compute the weights as:
    $$
w_{ij} = 
\begin{cases}
\dfrac{1}{d_{ij}}, & \text{if } d_{ij} \leq \text{threshold} \text{ and } where \;\;\; i \neq j \\
0, & \text{otherwise}
\end{cases}
$$ 

\end{itemize}

In both ways, the weights can be row standardised, such that $w_{ij} = \frac{w_{ij}}{\sum_k w_{ik}}.$ Each row of the matrix sums up to a unit for easy comparison of weights and to prevent overfitting.

\subsection{Direct spatial confounding and Bias with Poisson distributed treatment and confounder}

We are interested in spatial confounding as it relates to causal inference as defined in the introductory section. Having also established that the use of spatial random effects does not solve the problem of spatial confounding (\citet{schnell2020}; \citet{paciorek2010} ). Previous bias results (where both treatment and confounder are assumed to be normally distributed) obtained in literature (e.g. \citet{paciorek2010}; \citet{khan2023}) will change based on the distribution of the confounder and treatment. As an illustration, we propose a case where the unobserved confounder is Poisson distributed (for example, if frequency/number of visits to hospital in health studies is an unmeasured potential confounder) but with a spatial structure, introducing spatial confounding. We then derived the bias in the presence of direct spatial confounding when the exposure and missing confounder both have Poisson spatial structure. \\

\begin{proposition}
    Let the treatment \( A(\mathbf{s}_i) \) and confounder \( U(\mathbf{s}_i) \) be spatially distributed such that \( A(\mathbf{s}_i) = A_a(\mathbf{s}_i) + A_u(\mathbf{s}_i) \), where \( A_u(\mathbf{s}_i) \) is equivalent in distribution to  \( U(\mathbf{s}_i) \), and both components, $A_a(\mathbf{s}_i)$ and  $A_u(\mathbf{s}_i)$ are independent Poisson processes with rates \( \delta_1 (\mathbf{s}_i) \) and \( \delta_2 (\mathbf{s}_i) \), respectively. Then, the conditional distribution of \( U(\mathbf{s}_i) \) given \( A(\mathbf{s}_i) \) is \( \text{Binomial}(A, \frac{\delta_2}{\delta_1 + \delta_2}) \) assuming that $\delta_i$ is constant across each of the locations $\bf{s_i}.$
\end{proposition}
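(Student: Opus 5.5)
The plan is to reduce the statement to the classical fact that, for independent Poisson counts, one summand conditioned on the sum is binomial. Fix a location $\mathbf{s}_i$. Since the rates are assumed constant in space, write $\delta_1=\delta_1(\mathbf{s}_i)$ and $\delta_2=\delta_2(\mathbf{s}_i)$. Then $A_a(\mathbf{s}_i)\sim\text{Poisson}(\delta_1)$ and $A_u(\mathbf{s}_i)\sim\text{Poisson}(\delta_2)$, and the two are independent.

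I will read the hypothesis that $A_u$ is ``equivalent in distribution to'' $U$ as saying that the component of the treatment driven by the confounder is $U$ itself. Formally, I take $U(\mathbf{s}_i)=A_u(\mathbf{s}_i)$, so that $A=A_a+U$ with $A_a\indep U$. This identification is the one place where care is needed. Mere equality in distribution would not determine the joint law of $(U,A)$, so the proof must state explicitly that the coupling intended in the statement is $A=A_a+U$. Once this is fixed, the result is a pointwise computation and the spatial indexing plays no further role.

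First I will show $A\sim\text{Poisson}(\delta_1+\delta_2)$. One route is through the probability generating function: independence gives $E[z^{A}]=e^{\delta_1(z-1)}e^{\delta_2(z-1)}=e^{(\delta_1+\delta_2)(z-1)}$. Equivalently, one can convolve the two mass functions and apply the binomial theorem.

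Next, for integers $0\le k\le n$, I will compute
\begin{equation*}
P(U=k\mid A=n)=\frac{P(U=k)\,P(A_a=n-k)}{P(A=n)}=\frac{\frac{e^{-\delta_2}\delta_2^k}{k!}\cdot\frac{e^{-\delta_1}\delta_1^{n-k}}{(n-k)!}}{\frac{e^{-(\delta_1+\delta_2)}(\delta_1+\delta_2)^n}{n!}}.
\end{equation*}
The first equality uses independence, through the event identity $\{U=k,A=n\}=\{U=k,A_a=n-k\}$. In the quotient the exponentials cancel. What remains is $\binom{n}{k}p^k(1-p)^{n-k}$ with $p=\delta_2/(\delta_1+\delta_2)$. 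For $k>n$ the conditional probability is zero, because $A_a\ge 0$ forces $U\le A$. Together these give $U(\mathbf{s}_i)\mid A(\mathbf{s}_i)\sim\text{Binomial}\bigl(A(\mathbf{s}_i),p\bigr)$.

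Finally, because the rates are constant across locations, $p$ does not depend on $\mathbf{s}_i$, and the same law holds at every location. As a remark, the consequence needed later is $E[U\mid A]=pA$. This is linear in $A$, and it is this linearity that will let the bias computation proceed exactly as in the previous proposition.
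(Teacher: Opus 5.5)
Your proposal is correct and follows essentially the same route as the paper: take $A = A_a + U$ with independent Poisson summands, then divide the convolution joint pmf $P(U=k)\,P(A_a=n-k)$ by the Poisson$(\delta_1+\delta_2)$ marginal of $A$ to obtain the binomial pmf with $p=\delta_2/(\delta_1+\delta_2)$. Where the paper silently takes $U = A_u$ and simply asserts the law of the sum, you say explicitly that equality in distribution alone does not fix the joint law of $(U,A)$ and you justify the sum's law via the probability generating function; both are small but worthwhile refinements.
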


\begin{proof}

    Suppose that \( U(\mathbf{s}_i) \sim \text{Poisson}(\delta_2) \) and \( A_a(\mathbf{s}_i) \sim \text{Poisson}(\delta_1) \) are independent. Then their sum
\[
A(\mathbf{s}_i) = A_a(\mathbf{s}_i) + U(\mathbf{s}_i)
\]
is also Poisson distributed with rate \( \delta_1 + \delta_2 \), i.e.,
\[
A(\mathbf{s}_i) \sim \text{Poisson}(\delta_1 + \delta_2).
\]

Let us denote \( A = A(\mathbf{s}_i) \) and \( U = U(\mathbf{s}_i) \). The joint distribution of \( U \) and \( A \) can be expressed using the convolution of two independent Poisson variables:
\[
f(U = u, A = a) = f_U(u) \cdot f_{A_a}(a - u),
\]
where
\[
f_U(u) = \frac{\delta_2^u e^{-\delta_2}}{u!}, \quad f_{A_a}(a - u) = \frac{\delta_1^{a - u} e^{-\delta_1}}{(a - u)!}.
\]

The marginal distribution of \( A \) is:
\[
f_A(a) = \frac{(\delta_1 + \delta_2)^a e^{-(\delta_1 + \delta_2)}}{a!}.
\]

Therefore, the conditional distribution is:
\begin{align*}
f(U = u \mid A = a) &= \frac{f(U = u, A = a)}{f_A(a)} \\
&= \frac{\frac{\delta_2^u e^{-\delta_2}}{u!} \cdot \frac{\delta_1^{a - u} e^{-\delta_1}}{(a - u)!}}{\frac{(\delta_1 + \delta_2)^a e^{-(\delta_1 + \delta_2)}}{a!}} \\
&= \frac{\delta_2^u \delta_1^{a - u} e^{-(\delta_1 + \delta_2)} \cdot a!}{u!(a - u)! (\delta_1 + \delta_2)^a e^{-(\delta_1 + \delta_2)}} \\
&= \binom{a}{u} \frac{\delta_2^u \delta_1^{a - u}}{(\delta_1 + \delta_2)^a} \\
&= \binom{a}{u} \left( \frac{\delta_2}{\delta_1 + \delta_2} \right)^u \left( \frac{\delta_1}{\delta_1 + \delta_2} \right)^{a - u}.
\end{align*}

Thus, \( U \mid A \sim \text{Binomial}(A, \frac{\delta_2}{\delta_1 + \delta_2}) \).
\end{proof}

\begin{proposition} 
The bias induced by non normal spatial confounder (Poisson distributed) in estimating causal effects of spatial Poisson distributed exposure that varies at two scales with the confounder is $\beta_u \left[ \frac{\delta_2}{\delta_1 + \delta_2}\right ]$, where $\delta_1$, $\delta_2$ are the Poisson parameters and $\beta_u$ is the confounding effect. 
\end{proposition}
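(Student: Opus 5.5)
The plan is to work in the direct-confounding model \textbf{M4}, $Y(\mathbf{s}_i)=\beta_0+\beta_a A(\mathbf{s}_i)+\beta_u U(\mathbf{s}_i)+\epsilon_i$, with $U$ omitted from the fitted model. I will proceed as in the proof of Proposition 1 for spatial interference, but now the omitted term is random given $A$. First, regress $Y$ on $A$ alone, leaving out $U$. Then take the conditional expectation of the least squares estimator given $A$:
\[
E(\hat\beta_a\mid A)=\beta_a+\beta_u\,(A^TA)^{-1}A^T E(U\mid A),
\]
with the intercept/centering handled exactly as in Proposition 1. The only new ingredient is $E(U\mid A)$, and the preceding proposition supplies it. Under $A=A_a+A_u$ with $A_u\overset{d}{=}U$ and constant rates, $U\mid A\sim \text{Binomial}\bigl(A,\tfrac{\delta_2}{\delta_1+\delta_2}\bigr)$. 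Hence $E(U\mid A)=\tfrac{\delta_2}{\delta_1+\delta_2}A$ componentwise.

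Substituting this gives
\[
(A^TA)^{-1}A^T\frac{\delta_2}{\delta_1+\delta_2}A=\frac{\delta_2}{\delta_1+\delta_2}.
\]
Therefore $\text{Bias}(\hat\beta_a\mid A)=\beta_u\frac{\delta_2}{\delta_1+\delta_2}$. Because this does not depend on $A$, the unconditional bias is the same by iterated expectation. As a cross-check, I will confirm the value in population form, $\text{Cov}(A,U)/\text{Var}(A)=\delta_2/(\delta_1+\delta_2)$, using Poisson variances and the independence of $A_a$ and $U$.

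\textbf{Main obstacle: the intercept.} The identity $E(U\mid A)=cA$ is linear through the origin, so the no-intercept OLS step is exact. If the data are centered, $E(U-\bar U\mid A)=c(A-\bar A)$ still holds because the conditional mean is linear, so the same cancellation goes through; I will state this explicitly.

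\textbf{Secondary obstacle: the spatial/GLS version.} If spatial dependence is carried by the rates, the GLS form $(A^T\Omega^{-1}A)^{-1}A^T\Omega^{-1}cA=c$ also cancels. It therefore reduces to the same expression only under the proposition's assumption that $\delta_1,\delta_2$ are constant across locations. Where the rates vary, $c$ becomes a diagonal matrix and the bias is no longer this scalar, so I will restrict the claim accordingly.
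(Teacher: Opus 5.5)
Your proposal is correct and follows the paper's proof essentially step for step: condition the OLS (and GLS) estimator on $A$, insert $E(U\mid A)=\frac{\delta_2}{\delta_1+\delta_2}A$ from the binomial proposition, and let $(A^TA)^{-1}A^TA$ (resp.\ $(A^T\Omega^{-1}A)^{-1}A^T\Omega^{-1}A$) cancel to $1$, which gives $\beta_u\frac{\delta_2}{\delta_1+\delta_2}$ in both settings. Your intercept handling, population cross-check and restriction to constant rates are sound refinements the paper leaves implicit; note, however, that both the binomial step and your cross-check actually need $A_u=U$ (as the paper's own proof uses), not merely $A_u\overset{d}{=}U$.
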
 

\begin{proof}
Suppose $U\bf{(s_i)} \sim Poisson (\delta_2 \bf{(s_i)}) $ and $A\bf{(s_i)}$ varies at two scales where one is dependent on the confounder ($A_a\bf{(s_i)}$) and the other is independent on the confounder $A_u\bf{(s_i)}$ such that $A\bf{(s_i)} = A_a\bf{(s_i)} + A_u\bf{(s_i)}$.  If $ A_a\bf{(s_i)} \sim Poisson (\delta_1 \bf{(s_i)}) $ and  $A_u \bf{(s_i)} \sim Poisson (\delta_2 \bf{(s_i)})$ are independent, where $A\bf{(s_i)} \sim Poisson (\delta_1 \bf{(s_i)} \ + \delta_2 \bf{(s_i)} )$, then $U\bf{(s_i)}$ and  $A\bf{(s_i)}$ are conditionally independent. Now, using equation \ref{equation 4}, given as:

\begin{equation*}
    Y\bf{(s_i)} = \beta_a A\bf{(s_i)} + \beta_u U\bf{(s_i)} + \epsilon_i 
\end{equation*}

The bias in nonspatial settings in this case is derived as:
\begin{align*}
E\left(\hat{\beta_a^{*}}|A\right)_{C}^{Poi} &= \left(A^TA\right)^{-1}A^T\left( \beta_a A\bf{(s_i)} + \beta_u E\left( U\bf{(s_i)}|A\right) \right) \\
&= \beta_a + \beta_u \left(A^TA\right)^{-1}A^T  E\left( U\bf{(s_i)}|A\bf{(s_i)}\right)
\end{align*}
Using proposition 2, we have;
\begin{align*}
E\left(\hat{\beta_a^{*}}|A\right)_{C}^{Poi}  &= \beta_a + \beta_u \left(A^TA\right)^{-1}A^T \left[A \frac{\delta_2}{\delta_1 + \delta_2}\right ] \\
& = \beta_a + \beta_u \left(A^TA\right)^{-1}A^TA \left[ \frac{\delta_2}{\delta_1 + \delta_2}\right ] \\
&= \beta_a + \beta_u \left[ \frac{\delta_2}{\delta_1 + \delta_2}\right ]
\end{align*}

Hence, the bias in non-spatial setting is given as:

$$Bias\left(\hat{\beta_a^{*}}|A\right)_{C}^{Poi} = \beta_u \left[ \frac{\delta_2}{\delta_1 + \delta_2}\right ]$$

Similarly, the bias in spatial setting is derived as: 
\begin{align*}
E\left(\hat{\beta_a}|A\right)_{C}^{Poi} &= \left(A^T \Omega^{-1} A\right)^{-1}A^T \Omega^{-1} \left( \beta_a A\bf{(s_i)} + \beta_u E\left( U\bf{(s_i)}|A\right) \right) \\
&= \beta_a + \beta_u \left(A^T \Omega^{-1} A\right)^{-1}A^T \Omega^{-1} E\left( U\bf{(s_i)}|A\bf{(s_i)}\right)\\
&= \beta_a + \beta_u \left(A^T \Omega^{-1} A\right)^{-1}A^T \Omega^{-1} \left[A \frac{\delta_2}{\delta_1 + \delta_2}\right ] \\
& = \beta_a + \beta_u \left(A^T \Omega^{-1} A\right)^{-1}A^T \Omega^{-1} A \left[ \frac{\delta_2}{\delta_1 + \delta_2}\right ] 
\end{align*}

Hence, the bias in spatial setting is given as:

$$Bias\left(\hat{\beta_a}|A\right)_{C}^{Poi} = \beta_u \left[ \frac{\delta_2}{\delta_1 + \delta_2}\right ] $$
\end{proof}

Notice that the bias in both spatial and non-spatial settings when the treatment and confounder have a Poisson distribution is the same and independent of the variance-covariance matrix.

\subsection{Quantifying bias due to direct and indirect spatial confounding.}
The bias in the presence of direct spatial confounding, as shown in Equation \ref{equation 4}, has been derived and studied extensively (see \citet{paciorek2010} and \citet{khan2023}). Here we focus on deriving and evaluating the bias due to indirect spatial confounding. 

Consider equation \ref{equation 5} given as: 
\begin{align*}
    Y\bf{(s_i)} &=  \beta_a A\bf{(s_i)} + \beta_u U\bf{(s_i)} +  \beta_{\tilde{u}} \tilde{U}\bf{(s_i)}+ \epsilon_i 
\end{align*}

Assume that the treatment $A$ decomposes into two components: one correlated with the confounder and another independent Gaussian process. Specifically,
$$A = A_a + A_u, \quad \quad A \sim N(0, \sigma_a \Omega_a (\omega) + \sigma_u \Omega_u (\omega)),$$   $$U \sim N(0,\sigma_u \Omega_u (\omega)), \quad \quad  Cov (A, U) = \rho \sigma_a \sigma_u \Omega_u (\omega),$$ where $\Omega_u (\omega)$ and $\Omega_a (\omega)$ are spatial covariance matrices, $\rho$ is the correlation between A and U, and $\sigma_a$, and $\sigma_u$ are their respective marginal standard deviations.
\begin{align*}  
E\left(\hat{\beta_a^{*}}|A\right)_{SC} &= \left(A^TA\right)^{-1}A^T\left( \beta_a A\bf{(s_i)}+ \beta_u E\left( U\bf{(s_i)} |A\right) +\beta_{\tilde{u}} E\left(\tilde{U}\bf{(s_i)} | A\bf{(s_i)} \right) \right)\\
&= \beta_a + \beta_u \left(A^TA\right)^{-1}A^T E\left(U|A \right) +  \beta_{\tilde{u}} \left(A^TA\right)^{-1}A^T E\left(\tilde{U}|A \right) 
\end{align*}
The second part $\beta_u \left(A^TA\right)^{-1}A^T E\left(U|A \right) $ had been obtained  by \citet{paciorek2010} for non spatial setting and \citet{khan2023} for spatial setting (see appendix for details).

The main novelty here is with the third part which involves the indirect confounder $\beta_{\tilde{u}} \left(A^TA\right)^{-1}A^T E\left(\tilde{U}|A \right)$ i.e. bias due to indirect spatial confounding. \\

\begin{proposition}
The bias induced by an indirect spatial confounder $\tilde{U} = \Phi U$, where both $\tilde{U}$ and $A$ are Gaussian processes in non spatial and spatial settings, is given by:
$\frac{\beta_{\tilde{u}} M(\omega) A^T \Omega^{-1 } \phi^T  (A-\mu_a 1)}{A^T \Omega^{-1 }A}$ and $\beta_{\tilde{u}} M(\omega) ( A^T A)^{-1}A^T \phi^T  (A-\mu_a 1)$ respectively

where: $M(\omega) = \frac{\rho\sigma_a \sigma_u \Omega_u (\omega)} {\sigma_a^2 \Omega_a (\omega)+ \sigma_u^2 \Omega_u (\omega)}.$ Note that $A$ is centered.
\end{proposition}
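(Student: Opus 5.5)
The plan is to copy the template of Proposition 1: plug the model of Equation \ref{equation 5} into the OLS estimator (non-spatial setting) and the GLS estimator (spatial setting), take conditional expectations given $A$, and keep the three resulting terms. The term with $\beta_u$ is the direct confounding bias already obtained by \citet{paciorek2010} and \citet{khan2023}, so it is simply carried along. The new work is the term
\[
\beta_{\tilde{u}}\left(A^TA\right)^{-1}A^T E\left(\tilde{U}\mid A\right)
\]
and its GLS analogue, in which $\left(A^TA\right)^{-1}A^T$ is replaced by $\left(A^T\Omega^{-1}A\right)^{-1}A^T\Omega^{-1}$. Everything reduces to computing $E(\tilde{U}\mid A)$ in closed form.

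\textbf{Step 1.} Since $\tilde{U}=\Phi U$ and $\Phi$ is a fixed, non-random weight matrix, linearity of conditional expectation gives $E(\tilde{U}\mid A)=\Phi\, E(U\mid A)$.

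\textbf{Step 2.} Use the joint Gaussian structure of $(A,U)$ in the stated set-up, $A\sim N(\mu_a 1,\Sigma_A)$ with $\Sigma_A=\sigma_a^2\Omega_a(\omega)+\sigma_u^2\Omega_u(\omega)$, $\mathrm{Cov}(U,A)=\rho\sigma_a\sigma_u\Omega_u(\omega)$ and $E(U)=0$. The standard conditional mean formula for multivariate normals then gives
\[
E(U\mid A)=\rho\sigma_a\sigma_u\Omega_u(\omega)\,\Sigma_A^{-1}(A-\mu_a 1)=M(\omega)(A-\mu_a 1),
\]
where $M(\omega)$ is read as the matrix product $\rho\sigma_a\sigma_u\Omega_u(\omega)\,\Sigma_A^{-1}$. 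The proposition writes this as a quotient, and I will state that interpretation explicitly.

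\textbf{Step 3.} Combining Steps 1 and 2 gives $E(\tilde{U}\mid A)=\Phi M(\omega)(A-\mu_a1)$. Substituting into the OLS and GLS projections yields the two displayed bias expressions. The denominators $A^TA$ and $A^T\Omega^{-1}A$ are scalars, so they can be written as fractions exactly as in Proposition 1.

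The main obstacle is bookkeeping rather than analysis, in two places.

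\emph{Ordering and transposition.} The statement places $M(\omega)$ to the left of $A^T$ and writes $\phi^T$. Since $M(\omega)$ is an $n\times n$ matrix, it cannot be pulled out as a scalar. I would first derive the correct form $A^T\Phi M(\omega)(A-\mu_a1)$. I would then note that the stated form agrees with it only under additional conditions: $M(\omega)$ commuting with $\Phi$ and $\Phi$ symmetric, or $M$ effectively scalar, for example when $\Omega_a=\Omega_u$, in which case $M=\rho\sigma_a\sigma_u/(\sigma_a^2+\sigma_u^2)\,I$.

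\emph{Labelling of the two settings.} The statement's ordering of "non-spatial and spatial" appears swapped: the $\Omega^{-1}$ version belongs to the GLS (spatial) case. I would label the results accordingly.

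Finally, because $A$ is centered, $\mu_a=0$ can be set at the end, so the term $(A-\mu_a1)$ reduces to $A$.
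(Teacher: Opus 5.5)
Your proposal is correct and follows the paper's own route: linearity of conditional expectation through the fixed weight matrix, the Gaussian conditional-mean formula $E(U\mid A)=M(\omega)(A-\mu_a 1)$, and substitution into the OLS and GLS projections, with the OLS form assigned to the non-spatial case exactly as the paper's proof labels it. The paper's proof itself writes $E(\Phi^T U\mid A)$ even though it defines $\tilde U=\Phi U$, and it pulls the $n\times n$ matrix $M(\omega)$ to the front as if it were a scalar, so your expression $A^T\Phi M(\omega)(A-\mu_a 1)$ is what the argument actually delivers; the only refinement to your caveats is that no non-scalar $M(\omega)$ can sit in front of $A^T$, so the literal statement needs $M(\omega)\propto I$ (e.g.\ $\Omega_a=\Omega_u$), and in the GLS case also something like $\Phi=\Phi^T$ to exchange $\Phi$ for $\Phi^T$.
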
 

\begin{proof}
Starting with:
\begin{align*} 
\beta_{\tilde{u}} \left(A^TA\right)^{-1}A^T E\left(\tilde{U}|A \right)
&=\beta_{\tilde{u}} \left(A^TA\right)^{-1}A^T E\left(\Phi^T U|A\right)\\
&=\beta_{\tilde{u}} \left(A^TA\right)^{-1}A^T \Phi^T E\left(U| A \right) \\
&=\beta_{\tilde{u}} \left(A^TA\right)^{-1}A^T \Phi^T \left(  \mu_u \bf{1} + \frac{\rho\sigma_a \sigma_u \Omega_u (\omega)} {\sigma_a^2 \Omega_a (\omega)+ \sigma_u^2 \Omega_u (\omega)}   (A-\mu_a \bf{1} )  \right)
\end{align*}

Focusing on the second element, we have; 

\begin{align*}
\beta_{\tilde{u}} \left(A^TA\right)^{-1}A^T E\left(\tilde{U}|A \right) &=\beta_{\tilde{u}} \left(A^TA\right)^{-1}A^T \phi^T \left(\frac{\rho\sigma_a \sigma_u \Omega_u (\omega)} {\sigma_a^2 \Omega_a (\omega)+ \sigma_u^2 \Omega_u (\omega)}   (A-\mu_a 1)  \right)\\
&=\beta_{\tilde{u}} M(\omega) ( A^T A)^{-1}A^T \phi^T  (A-\mu_a 1)
\end{align*}

\noindent where $M(\omega) = \rho\sigma_a \sigma_u \Omega_u (\omega) [\sigma_a^2 \Omega_a (\omega)+ \sigma_u^2 \Omega_u (\omega)]^{-1}$ and all other notations take their usual meanings as previously defined. 

Therefore, bias induced by indirect spatial confounding in non spatial setting denoted by $Bias\left(\hat{\beta_a^{*}}|A\right)_{Ind. SC}$ is given as: 
\begin{align}  
Bias\left(\hat{\beta_a^{*}}|A\right)_{Ind. SC} &=\beta_{\tilde{u}} M(\omega) ( A^T A)^{-1}A^T \phi^T  (A-\mu_a 1)
\end{align}

Similarly, in spatial setting, bias induced by indirect spatial confounding denoted by $Bias\left(\hat{\beta_a^{}}|A\right)_{Ind. SC}$ is given as: 
\begin{align}  
Bias\left(\hat{\beta_a^{}}|A\right)_{Ind. SC} &= \frac{\beta_{\tilde{u}} M(\omega) A^T \Omega^{-1 } \phi^T  (A-\mu_a 1)}{A^T \Omega^{-1 }A}
\end{align}
\end{proof}
Hence, bias induced by direct and indirect spatial confounding in spatial and non spatial setting are given in equations \ref{equation direct and indirect sc non spatial} and \ref{equation direct and indirect sc spatial} respectively as: 
\begin{equation}  
Bias\left(\hat{\beta_a^{*}}|A\right)_{direct + ind. SC} = \beta_u \rho \frac{\sigma_u}{\sigma_a}\left[ \left(A^{*T}A^*\right)^{-1} A^{*T}K \left(A - \mu_a1 \right)  \right]_2 + \beta_{\tilde{u}} M(\omega) ( A^T A)^{-1}A^T \phi^T  (A-\mu_a 1)
\label{equation direct and indirect sc non spatial}
\end{equation}
and 
\begin{align}  
Bias\left(\hat{\beta_a^{}}|A\right)_{direct + ind. SC} &= \beta_u \rho \frac{\sigma_u}{\sigma_a}\left[ \left(A^{*T} \Omega^{-1}A^*\right)^{-1} A^{*T} \Omega^{-1} K \left(A - \mu_a1 \right)  \right]_2 \\
&  +\frac{\beta_{\tilde{u}} M(\omega) A^T \Omega^{-1 } \phi^T  (A-\mu_a 1)}{A^T \Omega^{-1 }A}
\label{equation direct and indirect sc spatial}
\end{align}

\subsection{Bias in the presence of spatial interference and spatial confounding (direct and indirect)}
In this section, the assumption of no confounding and no interference are relaxed and the bias due to both phenomenon were derived. 


Using equation \ref{equation 6}, model with spatial interference, direct and spatial confounding given as:
\begin{equation*}
     Y\bf{(s_i)} = \beta_a A\bf{(s_i)} + \beta_{\tilde{a}} \tilde{A}\bf{(s_i)} + \beta_u U\bf{(s_i)} +  \beta_{\tilde{u}} \tilde{U}\bf{(s_i)}+ \epsilon_i 
\end{equation*}

\noindent and following our results of bias derived in equations \ref{SI non spatial} and \ref{equation direct and indirect sc non spatial} the bias due to spatial interference and spatial confounding in non spatial setting for normally distributed confounder and treatment is given as follow:
\begin{align*}
 Bias\left(\hat{\beta_a^{*}}\right)_{SI\_SC} &= \beta_{\tilde{a}} \left(A^{T} A\right )^{-1} A^T \Psi A  + \beta_u \rho \frac{\sigma_u}{\sigma_a}\left[ \left(A^{*T}A^*\right)^{-1} A^{*T}K \left(A - \mu_a1 \right)  \right]_2 \\
 & +\beta_{\tilde{u}} M(\omega) ( A^T A)^{-1}A^T \phi^T  (A-\mu_a 1)
 \end{align*}


Similarly, in spatial setting, using results in equations \ref{SI  spatial} and \ref{equation direct and indirect sc spatial}, the bias dues to spatial interference and spatial confounding in spatial setting is given as:
\begin{align*}
 Bias\left(\hat{\beta_a} \right)_{IC} &= \beta_{\tilde{a}} \left(A^{T} \Omega^{-1}A\right )^{-1} A^T \Psi A  + \beta_u \rho \frac{\sigma_u}{\sigma_a}\left[ \left(A^{*T} \Omega^{-1}A^*\right)^{-1} A^{*T} \Omega^{-1} K \left(A - \mu_a1 \right)  \right]_2 \\
 & + \beta_{\tilde{u}} M(\omega) \left( A^T \Omega^{-1 }A \right)^{-1} A^T \Omega^{-1 } \phi^T  (A-\mu_a 1)
 \end{align*}
 
Note that we have centered the treatment variable by leaving out the intercept $\beta_0$ in obtaining the bias due to interference effect and indirect spatial confounding, but leave the variable uncentered for the bias due to direct spatial confounding just to replicate what \citet{khan2023} obtained in their study.

\section{Simulation study}
In this section, we present a series of simulation experiments to investigate the bias introduced in estimating the treatment effect under various spatial data-generating mechanisms. The simulation study is structured around four key objectives:

\begin{enumerate}
\item To assess the bias introduced by spatial interference, with attention to the impact of different weight structures, magnitudes of interference effects, and treatment distributions.
\item To examine how the distributional assumptions of both the treatment and confounder affect the estimation of bias.
\item To explore whether direct and indirect spatial confounding contribute similarly or differently to bias.
\item To quantify the bias when any of the three spatial phenomena  - interference, direct, or indirect confounding - are present but unaccounted for in the analysis.
\end{enumerate}

\subsection{Evaluation metrics}
Through out this paper, the root mean square error and the coverage probability were metrics employed for evaluating the simulation estimates obtained due to all the phenomenon.  

\begin{itemize}
    \item \textbf{Root Mean Square Error (RMSE):} The root mean square error is a metric for evaluating how large the estimated values is from the true parameter value on average. Though sensitive to outliers but it gives higher weights to large errors and is easily interpretable. A lower RMSE signifies a better model fit or better estimate. 

    RMSE for $\hat{\beta}$ is given by:

\begin{equation*}
    RMSE \; (\hat{\beta})  = \sqrt{Var(\hat{\beta)} + Bias(\hat{\beta})^2} 
\end{equation*}

    where Bias $(\hat{\beta}) = E(\hat{\beta}) - \beta$ and Var $(\hat{\beta}) =E\left[(\hat{\beta} - E(\hat{\beta}))^2 \right]$
    
Using simulation,  
    \begin{equation*}
       \hat{ RMSE \; (\hat{\beta}) } = \sqrt{\frac{1}{N-1} \sum_{i=1}^{N} \left(\hat{\beta}_i - \bar{\hat{\beta} }\right)^2 + \left({ \bar{\hat{\beta} } - \beta} \right)^2 } 
    \end{equation*}

where $$ \bar{\hat{\beta} } = E(\hat{\beta}) = \frac{1}{N} \sum_{i=1}^{N} \hat{\beta}_i$$
$N, \beta $ and $\hat{\beta_i}$ are the number of simulations, actual treatment effect and the estimated treatment effect for each $i$-th simulation respectively. If the estimator is unbiased, then RMSE is just equal to the square root of the variance.

    \item \textbf{Coverage Probability:} The coverage probability is the number of times or simply the proportion of times that the true parameter value falls within the confidence interval when carried out repeatedly over several samples or trials.

This is given as:

\begin{equation*}
    Coverage \; Probability = \frac{1}{n} \sum_{i=1}^n  \mathds{1} (\beta \in  C_i)
\end{equation*}
where $\mathds{1}$ is an indicator function which is 1 if the parameter $\beta$, truly falls within the confidence interval $C_i$ and $O$ if not while $n$ is the number of confidence interval. It is usually expressed in percentage. As illustration, if a simulation is performed 1000 times and for each of the simulations, 90\% confidence interval is constructed for the estimated parameter. A 90\% coverage probability means that the true parameter is contained 900 times in the interval out of the 1000 confidence intervals constructed. In cases where coverage probability is higher than the confidence interval, the interval is referred to as conservative and permissive if lower. 
    
\end{itemize}

\subsection{Bias induced by spatial interference}

We begin by simulating data from the model in Equation~\ref{equation 2}:
\begin{equation*}
Y(\mathbf{s}_i) = \beta_a A(\mathbf{s}_i) + \beta_{\tilde{a}} \tilde{A}(\mathbf{s}_i) + \epsilon_i
\end{equation*}

Here, $A(\mathbf{s}_i)$ is the spatially varying treatment, and $\tilde{A}(\mathbf{s}_i)$ captures the spatial interference. We simulate  under both non-spatial and spatial scenarios. In the non-spatial case, $\epsilon_i \sim N(0,1)$. In the spatial case, $\epsilon_i$  follows a zero-mean Gaussian process with an exponential covariance structure and range parameter of 2.

We randomly sample 100 spatial locations within the grid $\left[0, 10\right]  \times \left[0, 10\right]$ and construct $A(\mathbf{s}_i)$ from a Gaussian process with zero mean and an exponential covariance function with range 1.

We consider four scenarios:
\begin{itemize}
\item \textbf{Scenario 1:} Data generated with interference, model fitted without interference.
\item \textbf{Scenario 2:} Data generated with interference, model fitted accounting for interference.
\item \textbf{Scenario 3:} Data generated without interference, model fitted without interference.
\item \textbf{Scenario 4:} Data generated without interference, model fitted with interference.
\end{itemize}

For each scenario, we evaluate two cases:
\begin{itemize}
\item \textbf{Case 1:} Treatment effect is stronger than the interference effect ($\beta_a > \beta_{\hat{a}}$).
\item \textbf{Case 2:} Treatment effect is weaker than the interference effect ($\beta_a < \beta_{\hat{a}}$).
\end{itemize}

The interference structure  is generated using two weight types: k-nearest neighbors ($k=4$) and distance-based weights (using the 95th percentile as the threshold). Each simulation setting is repeated over 1,000 iterations.

\subsubsection{Simulation results for non-spatial setting} 

The results for discrete and continuous treatments in non-spatial setting are reported in Table~\ref{Discrete and continuous non spatial}. Corresponding boxplots with uncertainties appear in Figure~\ref{k-NN and distance, discrete and continuous}.

\begin{table}[H]
\centering
\caption{Simulation result for discrete and continuous treatment in non spatial setting }
\label{Discrete and continuous non spatial}
\begin{small}
\begin{tabular}{ll|ccc|ccc}
\hline \hline
\multicolumn{2}{c|}{Discrete treatment}  &   \multicolumn{3}{|c|}{k-NN based weight (k=4)}  &  \multicolumn{3}{|c} {Distance based weight - 95\% threshold} \\
\hline \hline
\textbf{Cases} & \textbf{Scenario} & \textbf{Bias} & \textbf{RMSE} & \textbf{Coverage} & \textbf{Bias} & \textbf{RMSE} & \textbf{Coverage}\\
\hline \hline
\multirow{2}{*}{$\beta_a = 8, \beta_{\hat{a}} = 2$} 
  & Scenario 1 & 0.5054 & 0.5163 &    0.436 & -0.0125 &  0.157  &  0.946\\
  & Scenario 2 & -0.0076 & 0.1702 &   0.953 &0.0054 & 0.1677 &   0.949 \\
\multirow{2}{*}{$(\beta_a > \beta_{\hat{a}})$} 
  & Scenario 3 & -0.0050 & 0.1530   & 0.954 &0.0070 &0.1549 &   0.951  \\
  & Scenario 4 & -0.0076 & 0.1702  &  0.953 &-0.0054 & 0.1677  &  0.949 \\
\hline \hline
\multirow{2}{*}{$\beta_a = 3, \beta_{\hat{a}} = 9$} 
     & Scenario 1 & 2.2921 & 2.3052 &    0.117  & -0.0808 & 0.2073  &  0.871  \\
  & Scenario 2 &  -0.0076 & 0.1702 &   0.953 &0.0054 & 0.1677 &   0.949 \\ 
  \multirow{2}{*}{$\beta_a < \beta_{\hat{a}}$} 
   & Scenario 3 & -0.0050 & 0.1530   & 0.954 & 0.0070 &0.1549 &   0.951\\
  & Scenario 4 & -0.0076 & 0.1702  &  0.953 &-0.0054 & 0.1677  &  0.949 \\
\hline \hline

\multicolumn{2}{c|}{Continuous treatment}   &   \multicolumn{3}{|c|}{k-NN based weight (k=4)}  &  \multicolumn{3}{|c} {Distance based weight - 95\% threshold} \\
\hline \hline
\multirow{2}{*}{$\beta_a = 8, \beta_{\hat{a}} = 2$} 
  & Scenario 1 & 0.7338 & 1.0470 &   0.894 & 0.1568 & 3.6976 &   0.952\\
  & Scenario 2 & -0.0389 & 0.9834 &    0.957 &0.0902 & 4.1877  &  0.950 \\
\multirow{2}{*}{$(\beta_a > \beta_{\hat{a}})$} 
  & Scenario 3 &-0.0118 & 0.8086 &    0.951 & 0.1651 & 3.6958 &   0.953\\
  & Scenario 4 &-0.0389 & 0.9834 &   0.957 & 0.0902 & 4.1877 &   0.950\\
\hline \hline
\multirow{2}{*}{$\beta_a = 3, \beta_{\hat{a}} = 9$} 
     & Scenario 1 & 3.3435 & 3.37432  &  0.270  & 0.1278 & 3.7109 &    0.951 \\
  & Scenario 2 & -0.0389 & 0.9834 &    0.957 & 0.0902 & 4.1877 &   0.950 \\ 
  \multirow{2}{*}{$\beta_a < \beta_{\hat{a}}$} 
   & Scenario 3 & -0.0118 & 0.8086 &    0.951 & 0.1651 & 3.6958 &    0.953 \\
  & Scenario 4 & -0.0389 & 0.9834  &    0.957 &0.0902 & 4.1877 &    0.950  \\
\hline \hline
\end{tabular}
\end{small}
\end{table}

Table~\ref{Discrete and continuous non spatial} shows the simulation results of the bias evaluation induced by spatial interference in non spatial setting for discrete and continuous treatments. The result showed that the average bias is highest in scenario 1 (when interference is present but not accounted for) with highest RMSE.  This shows that interference effect when present and neglected affects the inference on the estimates. However, this bias is sensitive to the choice of weight used as the bias is higher when weights are selected using k nearest neighbours than when the distance based weight is used.  Also, the bias is higher in most cases when the interference effect is higher than the treatment effect for both weights selection methods. The result revealed that both continuous and discrete treatment bias behave in similar pattern with respect to the weights methods and effect of interference effects size examined.  A key distinguish feature is that, the bias is higher when the treatment is continuous than when it is discrete. 

\begin{figure}[H]
\centering
\includegraphics[width=0.8\textwidth]{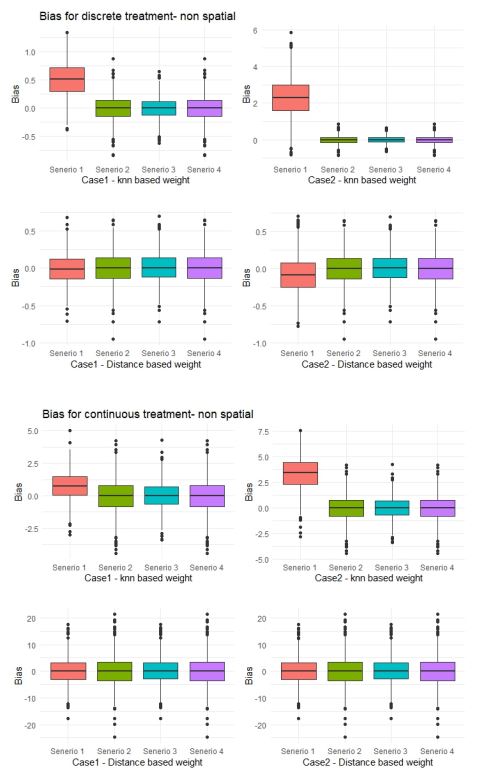}
    \caption{ Box plots showing the bias for discrete and continuous treatment with k-NN and distance based weights in non spatial setting}
    \label{k-NN and distance, discrete and continuous}
\end{figure}

Figure \ref{k-NN and distance, discrete and continuous} shows the box plots for discrete and continuous treatments for the cases considered and the two weights used. The figure revealed that the magnitude of the bias when interference is present and not accounted for is highest in all cases examined and weights used. In fact, the figure revealed that the range of bias in this situation can be negative or positive which can alter causal interpretation. The bias range is widest when the interference effect is higher that the treatment effect (case 2) and k-NN based weight is used. However, the bias is larger in continuous treatment when compared with the discrete treatment.

\subsubsection{Simulation results for spatial setting}
In the spatial setting, we retain the same scenarios and cases as before, but now model $\epsilon$ as spatially correlated noise. The results for discrete and continuous treatments are shown in Table~\ref{Discrete and Continuous Treatment Spatial}. Corresponding boxplots showing the uncertainty are provided in Figure~\ref{k-NN and distance based, discrete and Continuous Spatial}.

\begin{table}[H]
\centering
\caption{Simulation result for discrete and continuous treatment in spatial setting }
\label{Discrete and Continuous Treatment Spatial}
\begin{small}
\begin{tabular}{ll|ccc|ccc}
\hline \hline
Discrete &  &   \multicolumn{3}{|c|}{k-NN based weight (k=4)}  &  \multicolumn{3}{|c} {Distance based weight- 95\% threshold} \\
\hline \hline
\textbf{Cases} & \textbf{Scenario} & \textbf{Bias} & \textbf{RMSE} & \textbf{Coverage} & \textbf{Bias} & \textbf{RMSE} & \textbf{Coverage}\\
\hline \hline
\multirow{2}{*}{$\beta_a = 8, \beta_{\hat{a}} = 2$} 
  & Scenario 1 & -0.2550 &  0.284  &  0.630 & -0.0161 & 0.128  &  0.945\\
  & Scenario 2 & 0.0080 & 0.134   & 0.939 & 0.0057 & 0.144 &    0.938\\
\multirow{2}{*}{$(\beta_a > \beta_{\hat{a}})$} 
  & Scenario 3 & 0.0044 & 0.127  &  0.943 & 0.0044 & 0.127 &    0.943\\
  & Scenario 4 & 0.0080 & 0.134  &  0.939 & 0.0057 & 0.144 &   0.938\\
\hline \hline
\multirow{2}{*}{$\beta_a = 3, \beta_{\hat{a}} = 9$} 
     & Scenario 1 & -1.3700  &  1.380  &  0.140   & -0.0881 & 0.148  &  0.912 \\
  & Scenario 2 & 0.0080 & 0.134 &  0.939 & 0.0057 & 0.144 &    0.938\\ 
  \multirow{2}{*}{$\beta_a < \beta_{\hat{a}}$} 
   & Scenario 3 & 0.0044 &0.127 &   0.943 & 0.0044 & 0.127 &  0.943  \\
  & Scenario 4 & 0.0080 & 0.134  &  0.939 & 0.0057 & 0.144 &    0.938  \\
\hline \hline
Continuous &  &   \multicolumn{3}{|c|}{k-NN based weight (k=4)}  &  \multicolumn{3}{|c} {Distance based weight - 95\% threshold} \\
\hline \hline
\multirow{2}{*}{$\beta_a = 8, \beta_{\hat{a}} = 2$} 
  & Scenario 1 & -0.0887 &  0.1610   &  0.830 & -0.0141 & 0.0832  &  0.936\\
  & Scenario 2 &  0.00451 & 0.0829 &   0.942& 0.0063 & 0.0887  &  0.933\\
\multirow{2}{*}{$(\beta_a > \beta_{\hat{a}})$} 
  & Scenario 3 & 0.00540 & 0.0827 &   0.936 & 0.0054 & 0.0827  &  0.936\\
  & Scenario 4 & 0.00451 & 0.0829 &    0.942 & 0.0063 & 0.0887 &   0.933\\
\hline \hline
\multirow{2}{*}{$\beta_a = 3, \beta_{\hat{a}} = 9$} 
     & Scenario 1 & -0.5720 &  0.6820 &    0.687  & -0.0859  & 0.113 &    0.854 \\
  & Scenario 2 & 0.0045 & 0.0829  &  0.942 & 0.0063 & 0.0887 &   0.933\\ 
  \multirow{2}{*}{$\beta_a < \beta_{\hat{a}}$} 
   & Scenario 3 &  0.00540 & 0.0827 &   0.936 & 0.0054 & 0.0827  &  0.936  \\
  & Scenario 4 & 0.00451 & 0.0829 &   0.942 & 0.0063 & 0.0887 &   0.933  \\
\hline \hline
\end{tabular}
\end{small}
\end{table}

Similarly, in spatial setting as shown in Table~\ref{Discrete and Continuous Treatment Spatial} for discrete and continuous treatments. The bias is highest in scenario 1 and also with high RMSE. In contrary, the treatment effect is biased downward and also higher when the distance based weight is used compared to when k-NN based weight is used as against what was obtained in non spatial setting. In addition, the bias is higher in case 2 (situations where interference effect is higher than the treatment effect) for discrete treatment.

\begin{figure}[H]
\centering
\includegraphics[width=0.8\textwidth]{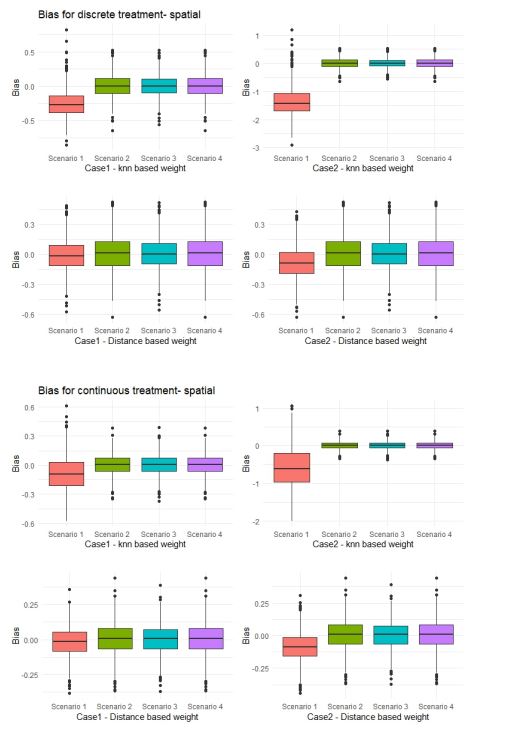}
   \caption{Box plots showing the bias for discrete and continuous treatment with k-NN and distance based weights in spatial setting}
  \label{k-NN and distance based, discrete and Continuous Spatial} 
\end{figure}

Figure~\ref{k-NN and distance based, discrete and Continuous Spatial} shows the box plot for discrete and continuous  treatment for the cases considered in spatial setting. For the discrete treatment, the result shows similar patterns to what was obtained in non spatial setting but the range of bias is lower in this case due to the structure of the spatial error.  The range of bias is highest in scenario 1, case 2.  Also, 
for continuous treatment, the bias is highest in scenario 1 but in contrary, the treatment effect is higher when the distance based weight is used compared to when k-NN is used as against what was obtained in non spatial setting. Surprisingly, the bias is lower when the treatment is continuous than when the treatment is discrete in spatial setting.

\subsection{Bias induced by spatial confounder with different distributions of treatment and confounder.}

Here, we examined the bias induced by spatial confounder when the distributions of the treatment and confounder changes. The goal is to investigate how the bias due to spatial confounding changes when both the treatment and confounder are not normally distributed.  We simulated a model of the form in equations \ref{equation 4} given as:
\begin{equation*}
    Y\bf{(s_i)} =  2 A\bf{(s_i)} + 1.5 U\bf{(s_i)} + \epsilon_i 
\end{equation*}
\noindent where $A\bf{(s_i)}$, $U\bf{(s_i)}$ are spatial treatments, spatial confounders and spatial random errors respectively. $A\bf{(s_i)}$ and $U\bf{(s_i)}$ are correlated and  we placed causal assumption from $U\bf{(s_i)}$ on $A\bf{(s_i)}$. 

We randomly sample 100 spatial locations within the grid [0, 10]×[0, 10] and construct $A\bf{(s_i)}$ and $U\bf{(s_i)}$ from a Gaussian process with zero mean and an exponential covariance function with range parameter of 2. In our simulation, we varied the distributions of both the treatment and confounders. Specifically, we simulated both to be normal, Poisson and binary but the $\epsilon_i$ is only spatial. We then fitted the model without the confounder i.e. $Y\bf{(s_i)} =  \beta_a A\bf{(s_i)} + \epsilon_i $ to examine the bias induced on the treatment effect for each types of distributions. 
Each simulation is repeated over 10,000 iterations for precision. The simulation results are presented in table \ref{different distributions of treatment and confounder} and shown in figure \ref{Treatment and Confounders Bias with RMSE}. We show the simulated Normal, Poisson and  Binary spatial treatment and spatial confounders with corresponding outcomes pictorially in figure \ref{Confounder and treaments}.  

\subsubsection{Simulation results}

The plot or representation of the simulated confounder and treatment for normal, Poisson and binomial distributions with their corresponding outcomes is shown in Appendix \ref{app: B3}

\begin{table}[H]
\centering
\caption{Treatment effect bias for different distributions of treatment and confounder}
\label{different distributions of treatment and confounder}
\begin{tabular}{|c|c|c|c|}
\hline \hline
\textbf{Confounder and Treatment} & \textbf{Bias} & \textbf{RMSE} & \textbf{Coverage} \\
\hline \hline
 Normal, Normal  & 0.0049 &  0.1732 &  0.93\\
 \hline
  Binary, Binary &  -0.0005 &  0.2288 & 0.894 \\
  \hline
  Poisson, Poisson & 0.0038 & 0.1796 & 0.93  \\
\hline \hline
\end{tabular}
\end{table}

In furtherance, comparing the bias for different distributions for the confounder and treatment, the result shown in Table~\ref{different distributions of treatment and confounder} revealed that the bias are not the same when the distributions of both the confounder and the treatment change. In fact, the bias could alter causal interpretation as it could be negative (binary case) or positive (normal and Poisson cases). The bias is highest in normal spatial treatment and normal spatial confounder with lowest RMSE. Followed by in Poisson spatial treatment and confounder. The bias is lowest when the confounder and treatment are both binary distributed with lowest coverage probability. The key finding here is that the bias changes significantly when the distribution of the treatment and confounders change. Strengthening our argument and motivation for deriving the bias when both the confounder and treatment are not normally distributed. The bias box plots for binary treatment and confounder, normal treatment and confounder and Poisson treatment and confounder is shown in Appendix \ref{app: B4}.

\subsection{Bias in the presence of spatial interference, direct and indirect spatial confounding}

In this subsection, the goal is to verify the bias when indirect spatial confounding or any other effect when present is ignored in causal inference. In particular, to investigate whether direct or indirect spatial confounding are the same or not as previously assumed in causal inference to support our theoretical derivations. We simulated a full model i.e model with treatment, spatial interference, direct and indirect spatial confounding ( T + I + DSC + ISC) of the form in Equation \ref{equation 6}. The simulated model is thus given as:

\begin{equation*}
    Y\bf{(s_i)} = 5 A\bf{(s_i)} + 3 \tilde{A}\bf{(s_i)} + 2.5 U\bf{(s_i)} +  2\tilde{U}\bf{(s_i)}+ \epsilon_i 
\end{equation*}

Nested models including models with; treatment and spatial interference(T + I), Treatment and direct spatial confounding (T + DSC), treatment and indirect spatial confounding (T + ISC), treatment, interference and direct spatial confounding (T + I + DSC), treatment, interference and indirect spatial confounding (T + I + ISC), treatment,  direct and indirect spatial confounding (T + DSC + DSC) were then fitted to examine the bias in the absence of each of them. 

We randomly sample 100 spatial locations within the grid [0, 10]×[0, 10] and construct $A\bf{(s_i)}$ and $U\bf{(s_i)}$ from a Gaussian process with zero mean and an exponential covariance function with range parameter of 2.
For the interference effect, k-NN and distance based weights are also examined, $\epsilon_i$ was only spatial here and simulated as previous discussed in section 5.1. The estimation was done using gls function in nlme package in R.

\subsubsection{Simulation results}

\begin{table}[H]
\centering
\caption{Bias in the presence of spatial interference, direct and indirect spatial confounding with k-NN and distance based weights}
\label{spatial interference, direct and indirect spatial confounding - k-NN and distance weight}
\begin{tabular}{|c|c|c|c|c|c|c|c|}
\hline \hline
 \textbf{S/N} & \textbf{Models} & \multicolumn{3}{c|}{\textbf{k-NN based weight (k=4)}} & \multicolumn{3}{c|}{\textbf{Distance based weight (95\%)}} \\
 \hline \hline
 &  & \textbf{Bias} & \textbf{RMSE} & \textbf{Coverage} & \textbf{Bias} & \textbf{RMSE} & \textbf{Coverage} \\
\hline \hline
1  &  T + I  & -0.0244  & 0.3980  &    0.97 & -0.0125 & 0.2530  &    0.93\\
\hline
 2 &     T + DSC & 0.5890 &  0.7260  &   0.54 & -0.0078 & 0.0913   &  0.91\\
 \hline
 3  &   T + ISC  & 1.1900  &  1.2100  &    0.23 & -0.0166 &  0.2450    &  0.93
\\
\hline
4  &    T + I + DSC & 0.0077 & 0.1690  &  0.94  & 0.0151  & 0.0885   &  0.96 \\
\hline
5  &  T + I + ISC  & 0.0018 & 0.1990  &   0.98 &  -0.0113 & 0.2400  &    0.93\\
\hline
6   &  T + DSC + ISC & -0.2560&   0.4240   &  0.45 & -0.0061 & 0.0872   &  0.92  \\
\hline
7  & T + I + DSC + ISC & 0.0218 & 0.0815  &  0.93 &    0.0177 &  0.0832  &   0.96  \\
\hline  \hline
\end{tabular}
\end{table}

Table~\ref{spatial interference, direct and indirect spatial confounding - k-NN and distance weight} shows the bias in the presence of spatial interference, direct spatial confounding and indirect spatial confounding for normally distributed treatment and confounder using k-NN and distance based weights. The result revealed that the bias is higher in models where interference effect is neglected with higher RMSE irrespective of the weighting methods. This is corroborating the importance of not neglecting interference effects in spatial causal inference when present. 
\begin{figure}[H]
  \centering
\includegraphics[width=0.8\textwidth]{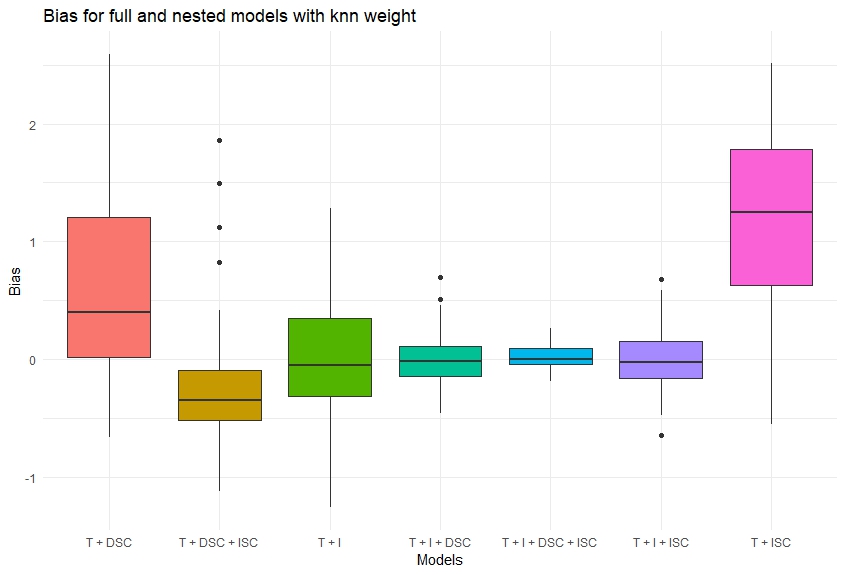}
   \caption{Box plots showing the bias for full and nested models with k-NN weights}
  \label{k-NN + I + DSC + ISC} 
\end{figure}

\begin{figure}[H]
  \centering
\includegraphics[width=.8\textwidth]{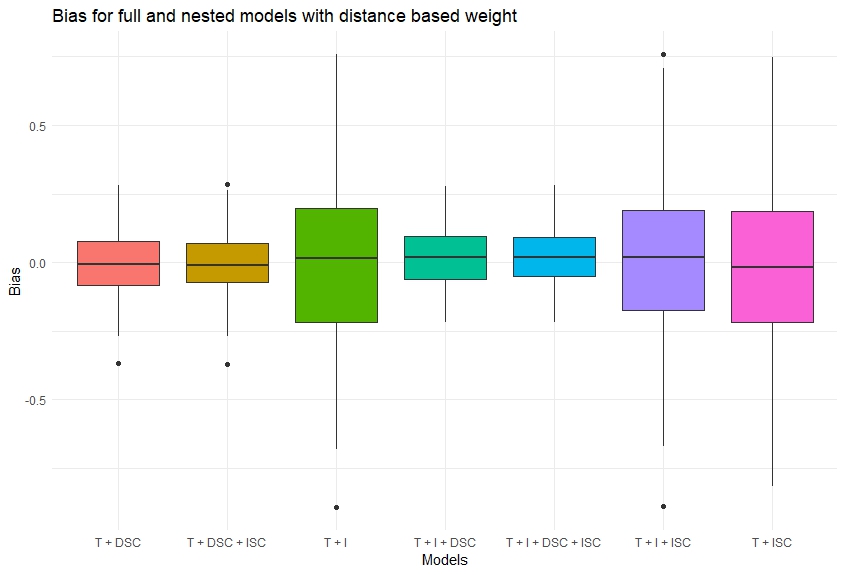}
   \caption{Box plots showing the bias for full and nested models with distance based weights}
  \label{Distance I + DSC + ISC} 
\end{figure}

Figures \ref{k-NN + I + DSC + ISC} and  \ref{Distance I + DSC + ISC} show the bias for the full model (model with treatment, interference, direct spatial confounding and indirect spatial confounding) and nested models. Meanwhile, using k-NN weight based, the result showed that when interference effect and direct spatial confounding only is accounted for, the range of treatment estimate bias is more  than when indirect spatial confounding is accounted for as evident in Figure \ref{k-NN + I + DSC + ISC}. But in contrary using distance based weights, the bias are almost similar and falls within the same range as shown in Figure \ref{Distance I + DSC + ISC}. Hence, we conclude that indirect spatial confounding can not always be neglected but dependent on how the confounders interferes with responses and treatments in another location.

\section{Application}

As illustration and in order to  disentangle spatial interference, direct and indirect spatial confounding biases, we provided application using Manchester datasets on air temperature, land surface temperature and precipitation between the period of 1st - 31st of January 2018 from Manchester Climate and Surface Daily Panel (2018). The data is an open access obtained from Google Earth Engine \citep{C3S} and available by \citep{uñoz} 

Air temperature being the degree of hotness or coldness of the air is one of the mostly measured environmental factor which affects other climate variables such as Land surface temperature.  As a result, air temperature is treated as the treatment and the land surface temperature is the outcome studied here. In addition, precipitation/rainfall could be a possible confounder which affects both air and land surface temperatures. For example, rainfall could cause the air temperature to be cooler due to increase in cloud cover or humidity. Also, after rainfall, soil wetness and moisture tend to increase and invariably affects land surface temperature. Hence, precipitation is taken as confounder. The air temperature is measured at 2 meters above the ground, Land surface temperature is measured during the night for robustness while precipitation is measured during the day

We fitted the models T + I, T + DSC, T + ISC, T + I + DSC, T + I + ISC, T + DSC + ISC and T + I + DSC + ISC as done in Table \ref{spatial interference, direct and indirect spatial confounding - k-NN and distance weight} to examine the bias when spatial interference and spatial confounder are both present but not accounted for and to distinguish direct and indirect spatial confounding. The results are shown in Table ~\ref{Bias results for the application data}. K-NN and distance based weights methods were both used in constructing the weights for the interference effects. Attention was paid mainly to the treatment effect estimate. The plots showing the treatment estimates and the confidence intervals are presented in Figure \ref{app - data}. The models were fitted using likfit() in geoR package \citep{ribeiro2020package}

\begin{table}[H]
\centering
\caption{Treatment effect estimates for the application data}
\label{Bias results for the application data}
\begin{tabular}{|c|c|c|c|c|c|c|c|}
\hline \hline
 \textbf{S/N} & \textbf{Models} & \multicolumn{2}{c|}{\textbf{k-NN based weight (k=4)}} & \multicolumn{2}{c|}{\textbf{Distance based weight (50\%)}} \\
 \hline \hline
 & . & \textbf{Estimate} & \textbf{Conf. Interval} & \textbf{Estimate} & \textbf{Conf. Interval.} \\
\hline \hline
1  &  T + I  &  0.1786 & [0.1680, 0.1891]  & 0.1426 &  [0.1396 , 0.1459]    \\
\hline
 2 &     T + DSC & -0.2068 & [-0.2104, -0.2031] & -0.2068 & [-0.2104, -0.2031]  \\
 \hline
 3  &   T + ISC  & -0.2045 & [-0.2081, -0.2008] & -0.2066 & [-0.2102,  -0.2029] \\
\hline
4  &    T + I + DSC & -0.1667 & [-0.1764, -0.1569]& -0.1817 & [-0.1853, -0.1782] \\
\hline
5  &  T + I + ISC  & 0.1890 & [0.1601, 0.2178]& 0.4012 & [0.3941, 0.4083] \\
\hline
6   &  T + DSC + ISC & -0.2070 &[-0.2106, -0.2033]& -0.2072 & [-0.2108, -0.2035] \\
\hline
7  & T + I + DSC + ISC & -0.1898 & [-0.2020, -0.1776] & -0.2037 & [-0.2074,  -0.2000]\\
\hline  \hline
\end{tabular}
\end{table}

Table \ref{Bias results for the application data} shows the treatment effect estimate in the presence of spatial interference, direct spatial confounding and indirect spatial confounding for the data in question using k-NN and distance based weights. First, the result showed that spatial interference and spatial confounding are two different phenomenon as the estimates are significantly different under both scenarios.  The result revealed that the estimates differs significantly. Of major concern is that, when direct spatial confounding and indirect spatial confounding are present but not accounted for, the effect estimates also varies significantly indicating that both are different from each other. The best model was found to be model which accounted for all the three phenomena due to its lowest AIC.  

\begin{figure}[H]
  \centering
\includegraphics[width=1\textwidth]{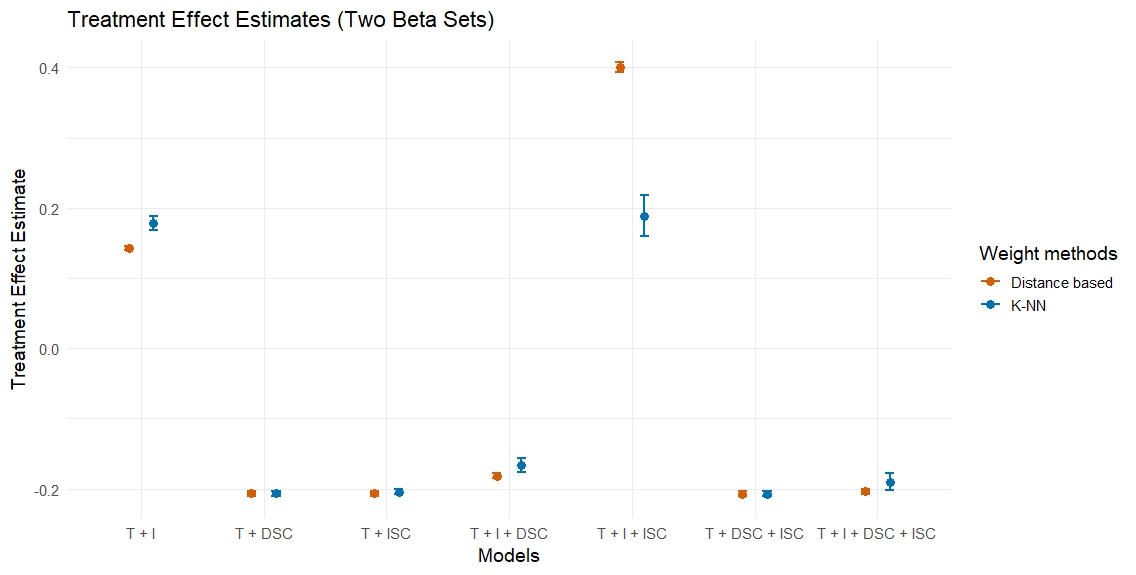}
   \caption{Plot of the treatment effect estimates for the models with 95\% confidence interval bounds}
  \label{app - data} 
\end{figure}

Figure \ref{app - data} shows the graphical representation of the treatment effect estimates. The blue dots represents the treatment effect estimates obtained using K-NN based weight for the construction of spatial interference and indirect spatial confounders while the red dots show the treatment effect estimates using distance based weights. The Figure shows clearly that spatial interference and spatial confounding are two different phenomenon as the estimates under both differs from each other. Also, the plot shows that direct spatial confounding and indirect spatial confounding are also the same as the estimates when only each of the phenomena is considered differs too. As a result, spatial confounding can be disentangled as contrary to assumption in existing literature that both are one. In addition, it can be deduced that when all are presents but any is ignored, the estimates varies significantly.

\section{Discussion}
The consequences of neglecting spatial interference when present can be severe or mild. This is dependent on the magnitude of the interference determined by the strength of spatial correlation among the treatment. Unlike existing literature which failed to clear the air on when to be bothered by spatial interference effect, this study revealed that the average bias is highest when interference is present but not accounted with highest RMSE and low coverage probability. This study also revealed that even when interference effect is accounted for, the distribution of the treatment contribute significantly to the magnitude of the bias. In fact, it can sometimes changes causal interpretation. Also, dealing with spatial or non spatial response constitutes to the bias. Bias tends to be higher in discrete treatment than in continuous treatment in non spatial setting while the opposite is observed in spatial setting. In short, the magnitude of the interference effect, the distribution of the treatment and the  spatial or non spatial nature of the response need to be scrutinized when dealing with causal inference on observational spatial data. 

In addition, k-NN and distance based weights were adopted in the construction of the weight matrix which is dependent on the spatial correlation among treatment. Changes in the number of nearest neighbours and the distance threshold or cut off as it affect the treatment estimates were also investigate and scrutinized in our study. The simulation results showed that, if the weight is constructed based on distance, the bias induced by the interference effect in both spatial and non spatial setting decreases as the threshold distance increases. In contrast, using k-NN weights, the bias increases as the number of nearest neighbours increases except for in spatial setting which behaves irregularly as shown in Appendices \ref{app: B1} and \ref{app: B2}. 

Furthermore, previous studies on spatial confounding bias assumed both the treatment and confounder to be normally distributed (e.g. \citet{khan2023}) which is not always the case. We studied the bias due to changes in the distributions of both the treatment and confounder. Our study revealed that the distributional properties of both the confounder and treatment is also a key factor when dealing with spatial confounding bias in causal inference.  The key finding here is that the bias changes significantly when the distribution of the treatment and confounders changes. Explicitly, our simulation result showed that the bias is greater when both the treatment and confounder are normally distributed. Strengthening our arguments and motivation for deriving the bias when both the confounder and treatment distributions changes which was not previously studied in existing causal inference literature.

Also, evaluating the bias in the presence of spatial interference, direct spatial confounding and indirect spatial confounding for normally distributed treatment and confounder using k-NN and distance based weights revealed that the bias is higher in models where interference effect is neglected with higher RMSE irrespective of the weighting methods. This is corroborating the importance of not neglecting interference effects in spatial causal inference when present. However, using k-NN weight based, the result showed that when interference effect and direct spatial confounding are only accounted for, the treatment estimate is more biased than when the three (interference effect, direct spatial confounding indirect spatial confounding) are accounted for. But in contrary using distance based weights, the bias are almost similar in both cases and falls within the same range. Hence, we conclude that indirect spatial confounding can not always be neglected but dependent on how the confounders interferes with responses and treatments in another location.  The deduction here is that, direct and and indirect confounding are not the same which is in contrary to previous studies which accounted for spatial confounding but assumed both direct and indirect spatial confounding to be one as studied by \citet{dupont2023} and \citet{khan2022}. Though, both can be interpreted as one when the indirect confounding is constructed in such away that makes the interference effects less  significant but this is not always the case. 

Finally, real data application on air surface temperature, land surface temperature, and precipitation revealed that spatial interference and spatial confounding are not the same and so also spatial confounding can be disentangle into direct and indirect spatial confounding. The model which accounted for the three phenomenon has the best fit to the data. In summary, when all are present and any is ignore, the treatment effect will be biased and inference made on such estimate may not be reliable.

\subsection{Bias-adjustment due to spatial interference}
Here, we present simple method to correct the bias induced by spatial interference based on the quantified bias. Assuming a known value of $\beta_{\tilde{a}}$, the bias can be corrected by subtracting the bias derived in equations \ref{SI non spatial} and \ref{SI  spatial} for the spatial and non spatial settings respectively from the estimated biased treatment effect $\hat{\beta_a}$. 
Hence, the corrected treatment effect for spatial and non spatial settings are given respectively as:
\begin{align*}
   \text{Corrected} \;\; \beta_a &= \hat{\beta_a} -  Bias\left(\hat{\beta_a^{*}}\right) \\
   &= \hat{\beta_a}  - \beta_{\tilde{a}}\frac{A^T \Psi A}{A^{T} A}  
\end{align*}
and 
\begin{align*}
   \text{Corrected} \;\; \beta_a &= \hat{\beta_a} -  Bias\left(\hat{\beta_a}\right) \\
   &= \hat{\beta_a}  - \beta_{\tilde{a}}\frac{A^T \Omega^{-1} \Psi A}{A^{T} \Omega^{-1}A}   
\end{align*}

\textbf{Algorithm for adjusting the bias}

The following steps can be taken in order to adjust for the bias:

(i)  Pick a $\Psi$ class and compute the bias

(ii) estimate  $\hat{\beta_a}$

(iii) compute adjusted $\hat{\beta_a}$

\subsection{Limitations and future study}

We have provided insight on the impacts of spatial interference bias when neglected in causal inference but we have only considered just two different weights. However, other forms of weights can also be explored. In addition, we distinguished between direct and indirect confounding and quantify the bias when spatial interference, direct and indirect spatial confounding are all present but we also did not consider bias due to spatial heterogeneity in causal inference. Also, we only gave a simple bias correction approach due to spatial interference bias based on quantified bias but we did not provide method for precisely estimating the treatment effect in the presence of both spatial interference and spatial confounder. In future, we intend to address heterogeneity in causal inference and developed a framework for simultaneously addressing spatial interference, direct and indirect spatial confounder for both direct and mediated effect estimation

\section{Conclusion}
In a novel way, we gave clarity to misconceptions and issues on spatial confounding from DAG perspective and disentangled spatial interference, direct and indirect spatial confounding. We also derived and evaluated the bias induced by spatial interference and indirect spatial confounding. The results showed that the choice of weights, the distributional properties and the magnitude of the interference effect are key determinants of the magnitude of bias due to spatial interference and the spatial covariance structure plays a key role in spatial setting.  
Also, the simulation results showed that, if the weight is constructed based on distance,  the bias induced by the interference effect in both spatial and non spatial setting decreases as the threshold distance increases. In contrast, using k-NN weights, the bias increases as the number of nearest neighbours increases except for in spatial setting which behaves irregularly.
On the other hand, we noted that the bias induced by spatial confounding (direct) previously obtained in literature 
was based on normality assumption for both the treatment and missing confounder. 
We pointed out that if the distribution of the confounder and treatment change, the bias will also change. We illustrated that using Poisson distributed confounder and treatment. 
Our simulation results and real life data application showed that direct and indirect spatial confounding are not the same. If spatial interference, direct and indirect spatial confounding are all present, and any is ignored the treatment effect of interest will be biased and inference made on such estimates can be misleading. 

\newpage
\appendix
\begin{center}
    \textbf{APPENDIX}
\end{center}

\section{Direct spatial confounding bias results}
\label{app:A}
Bias due to spatial confounding i.e.  $\beta_u \left(A^TA\right)^{-1}A^T E\left(U|A \right) $ in  non spatial setting was shown by \citet{paciorek2010} to be: $$\beta_u \rho \frac{\sigma_u}{\sigma_a}\left[ \left(A^{*T}A^*\right)^{-1} A^{*T}K \left(A - \mu_a1 \right)  \right]_2$$ 

Similarly for spatial setting, \citet{khan2023} derived the bias due to spatial confounding as:  $$ \beta_u \rho \frac{\sigma_u}{\sigma_a}\left[ \left(A^{*T} \Omega^{-1}A^*\right)^{-1} A^{*T} \Omega^{-1} K \left(A - \mu_a1 \right)  \right]_2$$  where $K = p_c \left(  p_c I + (1 - p_c)  \Omega(\omega_u) \Omega(\omega_c)^{-1} \right)^{-1}$, $p_c = \frac{\sigma^2}{ \sigma_c^2 + \sigma_u^2}$ and $A^* = [1 A]$. These derivations were adopted in section 4.4. See \citet{khan2023} for full derivation. 

\section{Supplementary simulation results}

\subsection{Simulation result when k = 1, 2, 3 and 5}
\label{app: B1}
\begin{table}[H]
\centering
\caption{Simulation result for discrete treatment in non spatial setting }
\label{Discrete Non spatial}
\begin{tiny}
\begin{tabular}{ll|ccc|ccc}
\hline \hline
Weights &  &   \multicolumn{3}{|c|}{k-NN (k=1)}  &  \multicolumn{3}{|c} {k-NN (k=2)}\\
\hline \hline
\textbf{Cases} & \textbf{Scenario} & \textbf{Bias} & \textbf{RMSE} & \textbf{Coverage} & \textbf{Bias} & \textbf{RMSE} & \textbf{Coverage}\\
\hline \hline
\multirow{2}{*}{$\beta_a = 8, \beta_{\hat{a}} = 2$} 
  & Scenario 1 & -0.0437 & 0.3178 &    0.832 & 0.3054 &  0.3611  &  0.705\\
  & Scenario 2 & -0.0056 & 0.1620 &   0.944 & -0.0008 & 0.1651 &   0.956 \\
\multirow{2}{*}{$(\beta_a > \beta_{\hat{a}})$} 
  & Scenario 3 & -0.0057 & 0.1600   & 0.948 & 0.0068 & 0.1586 &   0.950  \\
  & Scenario 4 & -0.0056 & 0.1620  &  0.944 &-0.0008 & 0.1651 &   0.956 \\
\hline \hline 
\multirow{2}{*}{$\beta_a = 3, \beta_{\hat{a}} = 9$} 
     & Scenario 1 & -0.1768 & 1.2610 &    0.744  & 1.3717 & 1.5190  &  0.497  \\
  & Scenario 2 &  -0.0056 & 0.1620  &   0.944 &0.0000 & 0.1651 &   0.956 \\ 
  \multirow{2}{*}{$\beta_a < \beta_{\hat{a}}$} 
   & Scenario 3 & -0.0057 & 0.1601   & 0.948 & 0.0007 &0.1586 &   0.950\\
  & Scenario 4 & -0.0056 & 0.1620  &  0.944 &0.0000 & 0.1651 &   0.956 \\
\hline \hline
 &  &   \multicolumn{3}{|c|}{k-NN (k=3)}  &  \multicolumn{3}{|c} {k-NN (k=5)}\\
\hline \hline
\multirow{2}{*}{$\beta_a = 8, \beta_{\hat{a}} = 2$} 
  & Scenario 1 & 0.4377 & 0.4607 &    0.544 & 0.5560 &  0.5641  &  0.366\\
  & Scenario 2 & 0.0015 & 0.1700 &   0.942 & -0.0112 & 0.1833 &   0.944 \\
\multirow{2}{*}{$(\beta_a > \beta_{\hat{a}})$} 
  & Scenario 3 & 0.0038 & 0.1574   & 0.953 & -0.0053 & 0.1611 &   0.953  \\
  & Scenario 4 & 0.0015 & 0.1700 &   0.942 &-0.0112 & 0.1833 &   0.944 \\
\hline \hline
\multirow{2}{*}{$\beta_a = 3, \beta_{\hat{a}} = 9$} 
     & Scenario 1 & 1.9564 & 1.9935 &    0.257  & 2.5204 & 2.5257  &  0.089  \\
  & Scenario 2 &  0.0015 & 0.1700 &   0.942 & -0.0112 & 0.1833 &   0.944 \\ 
  \multirow{2}{*}{$\beta_a < \beta_{\hat{a}}$} 
   & Scenario 3 & 0.0038 & 0.1574   & 0.953 & 0.0053 &0.1611 &   0.953\\
  & Scenario 4 & 0.0015 & 0.1700  &  0.942 &-0.0112 & 0.1833 &   0.944 \\
\hline \hline
\end{tabular}
\end{tiny}
\end{table}

\begin{table}[H]
\centering
\caption{Simulation results for continuous treatment in non spatial setting }
\label{continous Non spatial}
\begin{tiny}
\begin{tabular}{ll|ccc|ccc}
\hline \hline
Weights &  &   \multicolumn{3}{|c|}{k-NN (k=1)}  &  \multicolumn{3}{|c} {k-NN (k=2)}\\
\hline \hline
\textbf{Cases} & \textbf{Scenario} & \textbf{Bias} & \textbf{RMSE} & \textbf{Coverage} & \textbf{Bias} & \textbf{RMSE} & \textbf{Coverage}\\
\hline \hline
\multirow{2}{*}{$\beta_a = 8, \beta_{\hat{a}} = 2$} 
  & Scenario 1 & -0.0498 & 0.4753 &    0.926 & 0.4447 &  0.7074  &  0.905\\
  & Scenario 2 & -0.0173 & 0.4046 &   0.949 & -0.0121  & 0.5885 &   0.952 \\
\multirow{2}{*}{$(\beta_a > \beta_{\hat{a}})$} 
  & Scenario 3 & -0.0143 & 0.4027   & 0.947 & -0.0062  & 0.5566 &   0.951  \\
  & Scenario 4 & -0.0173 & 0.4046 &   0.949 & -0.0121 & 0.5885  &  0.952 \\
\hline \hline 
\multirow{2}{*}{$\beta_a = 3, \beta_{\hat{a}} = 9$} 
     & Scenario 1 & -0.1742 & 1.2414 &    0.796 & 2.0336 & 2.1311  &  0.478  \\
  & Scenario 2 &  -0.0173 & 0.4046 &   0.949 &-0.0121  & 0.5885 &   0.952 \\ 
  \multirow{2}{*}{$\beta_a < \beta_{\hat{a}}$} 
   & Scenario 3 & -0.0143 & 0.4027   & 0.947 & -0.0062  & 0.5566 &   0.951\\
  & Scenario 4 & -0.0173 & 0.4046 &   0.949 &-0.0121 & 0.5885  &  0.952 \\
\hline \hline
 &  &   \multicolumn{3}{|c|}{k-NN (k=3)}  &  \multicolumn{3}{|c} {k-NN (k=5)}\\
\hline \hline
\multirow{2}{*}{$\beta_a = 8, \beta_{\hat{a}} = 2$} 
  & Scenario 1 & 0.6472 & 0.9143 &    0.881 & 0.8030 &  1.1619  &  0.898\\
  & Scenario 2 & -0.0085 & 0.7701 &   0.951 & -0.0639 & 0.1693 &   0.946 \\
\multirow{2}{*}{$(\beta_a > \beta_{\hat{a}})$} 
  & Scenario 3 & 0.0000 & 0.6826   & 0.945 &-0.0131 &0.9171 &   0.943  \\
  & Scenario 4 & -0.0085 & 0.7701 &   0.951 &-0.0639 & 1.1693  &  0.946 \\
\hline \hline
\multirow{2}{*}{$\beta_a = 3, \beta_{\hat{a}} = 9$} 
     & Scenario 1 & 2.9122 & 2.9595 &    0.302  & 3.6595 & 3.6901 &  0.267  \\
  & Scenario 2 &  -0.0085 & 0.7701 &   0.951 &-0.0639 & 0.1693 &   0.946 \\ 
  \multirow{2}{*}{$\beta_a < \beta_{\hat{a}}$} 
   & Scenario 3 & 0.0000 & 0.6826   & 0.945 & -0.0131 &0.9171 &   0.943\\
  & Scenario 4 & -0.0085 & 0.7701 &   0.951&-0.0639 & 1.1693  &  0.946 \\
\hline \hline
\end{tabular}
\end{tiny}
\end{table}

\begin{table}[H]
\centering
\caption{Simulation result for discrete treatment in spatial setting }
\label{Discrete spatial}
\begin{tiny}
\begin{tabular}{ll|ccc|ccc}
\hline \hline
Weights &  &   \multicolumn{3}{|c|}{k-NN (k=1)}  &  \multicolumn{3}{|c} {k-NN (k=2)}\\
\hline \hline
\textbf{Cases} & \textbf{Scenario} & \textbf{Bias} & \textbf{RMSE} & \textbf{Coverage} & \textbf{Bias} & \textbf{RMSE} & \textbf{Coverage}\\
\hline \hline
\multirow{2}{*}{$\beta_a = 8, \beta_{\hat{a}} = 2$} 
  & Scenario 1 & -0.4240 & 0.518 &    0.506 & -0.3680 &  0.404  &  0.502\\
  & Scenario 2 & 0.0077 & 0.134 &   0.947 &0.0065 & 0.135 &   0.946 \\
\multirow{2}{*}{$(\beta_a > \beta_{\hat{a}})$} 
  & Scenario 3 & 0.0044 & 0.127   & 0.943 &0.0044 &0.127 &   0.943  \\
  & Scenario 4 & 0.0077 & 0.134 &   0.947 &0.0065 & 0.135 &   0.946 \\
\hline \hline 
\multirow{2}{*}{$\beta_a = 3, \beta_{\hat{a}} = 9$} 
     & Scenario 1 & -1.0200 & 2.680 &    0.348  & -1.8200 & 1.900  &  0.236  \\
  & Scenario 2 & 0.0077 & 0.134 &   0.947 & 0.0065 & 0.135 &   0.946 \\ 
  \multirow{2}{*}{$\beta_a < \beta_{\hat{a}}$} 
   & Scenario 3 & 0.0044 & 0.127   & 0.943 & 0.0044 &0.127 &   0.943\\
  & Scenario 4 & 0.0077 & 0.134 &   0.947 & 0.0065 & 0.135  &  0.946 \\
\hline \hline
 &  &   \multicolumn{3}{|c|}{k-NN (k=3)}  &  \multicolumn{3}{|c} {k-NN (k=5)}\\
\hline \hline
\multirow{2}{*}{$\beta_a = 8, \beta_{\hat{a}} = 2$} 
  & Scenario 1 & -0.3030 & 0.333 &    0.551 & -0.2200 &  0.251  &  0.679\\
  & Scenario 2 & 0.0067 & 0.136 &   0.943 &0.0083 & 0.134 &   0.943 \\
\multirow{2}{*}{$(\beta_a > \beta_{\hat{a}})$} 
  & Scenario 3 & 0.0044 & 0.127   & 0.943 &0.0044 &0.127 &   0.943  \\
  & Scenario 4 & 0.0067 & 0.136  &  0.943 &0.0083 & 0.134  &  0.943 \\
\hline \hline
\multirow{2}{*}{$\beta_a = 3, \beta_{\hat{a}} = 9$} 
     & Scenario 1 & -1.580 & 1.610 &    0.167  & -1.200 & 1.210  &  0.133  \\
  & Scenario 2 & 0.0067 & 0.136 &   0.943 &0.0083 & 0.134 &   0.943 \\ 
  \multirow{2}{*}{$\beta_a < \beta_{\hat{a}}$} 
   & Scenario 3 & 0.0044 & 0.127   & 0.943 & 0.0044 &0.127 &   0.943\\
  & Scenario 4 & 0.0067 & 0.136 &   0.943 &0.0083 & 0.134 &   0.943 \\
\hline \hline
\end{tabular}
\end{tiny}
\end{table}

\begin{table}[H]
\centering
\caption{Simulation result for continuous treatment in spatial setting }
\label{continuous spatial}
\begin{tiny}
\begin{tabular}{ll|ccc|ccc}
\hline \hline
Weights &  &   \multicolumn{3}{|c|}{k-NN (k=1)}  &  \multicolumn{3}{|c} {k-NN (k=2)}\\
\hline \hline
\textbf{Cases} & \textbf{Scenario} & \textbf{Bias} & \textbf{RMSE} & \textbf{Coverage} & \textbf{Bias} & \textbf{RMSE} & \textbf{Coverage}\\
\hline \hline
\multirow{2}{*}{$\beta_a = 8, \beta_{\hat{a}} = 2$} 
  & Scenario 1 & 0.2490 & 0.4040 &    0.655  & -0.0300 &  0.2060  &  0.818\\
  & Scenario 2 & 0.0054 & 0.0826 &   0.937 &0.0049 & 0.0825 &   0.938 \\
\multirow{2}{*}{$(\beta_a > \beta_{\hat{a}})$} 
  & Scenario 3 & 0.0054 & 0.0827 &   0.936 &0.0054 &0.0827 &   0.936 \\
  & Scenario 4 & 0.0054 & 0.0826 &   0.937 &0.0049 & 0.0825 &   0.938 \\
\hline \hline 
\multirow{2}{*}{$\beta_a = 3, \beta_{\hat{a}} = 9$} 
     & Scenario 1 & 2.2600 & 2.9100 &    0.522  & -0.2860 & 0.8760  &  0.774  \\
  & Scenario 2 &0.0054 & 0.0826 &   0.937  &0.0049 & 0.0825 &   0.938 \\ 
  \multirow{2}{*}{$\beta_a < \beta_{\hat{a}}$} 
   & Scenario 3 & 0.0054 & 0.0827 &   0.936 &0.0054 &0.0827 &   0.936 \\
  & Scenario 4 & 0.0054 & 0.0826 &   0.937 &0.0049 & 0.0825 &   0.938 \\
\hline \hline
 &  &   \multicolumn{3}{|c|}{k-NN (k=3)}  &  \multicolumn{3}{|c} {k-NN (k=5)}\\
\hline \hline
\multirow{2}{*}{$\beta_a = 8, \beta_{\hat{a}} = 2$} 
  & Scenario 1 & -0.0765 & 0.1710 &    0.833 & -0.0891 &  0.148  &  0.826\\
  & Scenario 2 & 0.0046 & 0.0828 &   0.940 &0.0051 & 0.0829 &   0.943 \\
\multirow{2}{*}{$(\beta_a > \beta_{\hat{a}})$} 
  & Scenario 3 & 0.0054 & 0.0827   & 0.936 &0.0054 &0.0827 &   0.936  \\
  & Scenario 4 & 0.0046 & 0.0828 &   0.940 &0.0051 & 0.0829 &   0.943 \\
\hline \hline
\multirow{2}{*}{$\beta_a = 3, \beta_{\hat{a}} = 9$} 
     & Scenario 1 & -0.5320 & 0.7110 &    0.755  & -0.5630 & 0.632  &  0.647  \\
  & Scenario 2 &  0.0046 & 0.0828 &   0.940  &0.0051 & 0.0829 &   0.943 \\ 
  \multirow{2}{*}{$\beta_a < \beta_{\hat{a}}$} 
   & Scenario 3 & 0.0054 & 0.0827   & 0.936 & 0.0054 &0.0827 &   0.936\\
  & Scenario 4 & 0.0046 & 0.0828 &   0.940  &0.0051 & 0.0829 &   0.943 \\
\hline \hline
\end{tabular}
\end{tiny}
\end{table}

\subsection{Simulation results when distance  threshold = 90\%, 80\%, 75\% and 50\%}
\label{app: B2}

\begin{table}[H]
\centering
\caption{Simulation result for discrete treatment in non spatial setting }
\label{Discrete Non spatial - distance}
\begin{tiny}
\begin{tabular}{ll|ccc|ccc}
\hline \hline
Weights &  &   \multicolumn{3}{|c|}{90\% threshold}  &  \multicolumn{3}{|c} {80\% threshold}\\
\hline \hline
\textbf{Cases} & \textbf{Scenario} & \textbf{Bias} & \textbf{RMSE} & \textbf{Coverage} & \textbf{Bias} & \textbf{RMSE} & \textbf{Coverage}\\
\hline \hline
\multirow{2}{*}{$\beta_a = 8, \beta_{\hat{a}} = 2$} 
  & Scenario 1 & 0.0036 & 0.1703 &    0.936 & 0.0855 &  0.1961  &  0.892\\
  & Scenario 2 & -0.0007 & 0.1750 &   0.945 & -0.0064 & 0.1787 &   0.944 \\
\multirow{2}{*}{$(\beta_a > \beta_{\hat{a}})$} 
  & Scenario 3 & 0.0000 & 0.1629   & 0.947 & -0.0039 &0.1586 &   0.953  \\
  & Scenario 4 &  -0.0007 & 0.1750 &   0.945 & -0.0064 & 0.1787 &   0.944\\
\hline \hline 
\multirow{2}{*}{$\beta_a = 3, \beta_{\hat{a}} = 9$} 
     & Scenario 1 & 0.0163 & 0.2805 &    0.772  & 0.3980 & 0.5572  &  0.492  \\
  & Scenario 2 &  -0.0007 & 0.1750 &   0.945 & -0.0064 & 0.1787 &   0.944 \\ 
  \multirow{2}{*}{$\beta_a < \beta_{\hat{a}}$} 
   & Scenario 3 & 0.0000 & 0.1639   & 0.947 & -0.0039 & 0.1586 &   0.953\\
  & Scenario 4 & -0.0007 & 0.1750 &   0.945 &-0.0064 & 0.1787  &  0.944 \\
\hline \hline
 &  &   \multicolumn{3}{|c|}{75\% threshold}  &  \multicolumn{3}{|c} {50\% threshold}\\
\hline \hline
\multirow{2}{*}{$\beta_a = 8, \beta_{\hat{a}} = 2$} 
  & Scenario 1 &  0.1527 & 0.2406 &    0.829 & 0.4451 &  0.4789  &  0.450\\
  & Scenario 2 & 0.0046 & 0.1903 &   0.936 & 0.0131 & 0.2034 &   0.947 \\
\multirow{2}{*}{$(\beta_a > \beta_{\hat{a}})$} 
  & Scenario 3 & 0.0074 & 0.1615   & 0.943 &0.0090 &0.1591 &   0.949  \\
  & Scenario 4 & 0.0046 & 0.1903 &   0.936 &0.0131 & 0.2034 &   0.947 \\
\hline \hline
\multirow{2}{*}{$\beta_a = 3, \beta_{\hat{a}} = 9$} 
     & Scenario 1 & 0.6615 & 0.7741 &    0.375  & 1.9719 & 2.0205  &  0.159  \\
  & Scenario 2 &  0.0046 & 0.1903 &   0.936 & 0.0131 & 0.2034 &   0.947 \\ 
  \multirow{2}{*}{$\beta_a < \beta_{\hat{a}}$} 
   & Scenario 3 & 0.0074 & 0.1615   & 0.943 & 0.0090 & 0.1591 &   0.949\\
  & Scenario 4 & 0.0046 & 0.1903  &  0.936 & 0.0131 & 0.2034  &  0.947 \\
\hline \hline
\end{tabular}
\end{tiny}
\end{table}

\begin{table}[H]
\centering
\caption{Simulation result for continuous treatment in non spatial setting }
\label{continuous Non spatial - distance}
\begin{tiny}
\begin{tabular}{ll|ccc|ccc}
\hline \hline
Weights &  &   \multicolumn{3}{|c|}{90\% threshold}  &  \multicolumn{3}{|c} {80\% threshold}\\
\hline \hline
\textbf{Cases} & \textbf{Scenario} & \textbf{Bias} & \textbf{RMSE} & \textbf{Coverage} & \textbf{Bias} & \textbf{RMSE} & \textbf{Coverage}\\
\hline \hline
\multirow{2}{*}{$\beta_a = 8, \beta_{\hat{a}} = 2$} 
  & Scenario 1 & 0.1003 & 11.7803 &    0.963 &  0.5100 &  8.4193  &  0.956\\
  & Scenario 2 & -0.3877 & 13.6483 &   0.953 & 0.0223 & 10.7109 &   0.950 \\
\multirow{2}{*}{$(\beta_a > \beta_{\hat{a}})$} 
  & Scenario 3 & 0.0510 & 11.7832   & 0.963 & 0.3292 & 8.4155 &   0.955  \\
  & Scenario 4 & -0.3877 & 13.6483 &   0.953 & 0.0223 & 10.7108  &  0.950 \\
\hline \hline 
\multirow{2}{*}{$\beta_a = 3, \beta_{\hat{a}} = 9$} 
     & Scenario 1 & 0.2731 & 11.7739 &    0.963  & 1.1425 & 8.4925  &  0.954  \\
  & Scenario 2 &  -0.3877 & 13.6483 &   0.953 &0.0223 & 10.7109 &   0.950 \\ 
  \multirow{2}{*}{$\beta_a < \beta_{\hat{a}}$} 
   & Scenario 3 & 0.0510 & 11.7832   & 0.963 &0.3292 & 8.4155 &   0.955\\
  & Scenario 4 &-0.3877 & 13.6483 &   0.953 &0.0223 & 10.7108  &  0.950 \\
\hline \hline
 &  &   \multicolumn{3}{|c|}{75\% threshold}  &  \multicolumn{3}{|c} {50\% threshold}\\
\hline \hline
\multirow{2}{*}{$\beta_a = 8, \beta_{\hat{a}} = 2$} 
  & Scenario 1 & 0.6089 & 7.6306 &  0.949 & 0.7649 &  4.3006  &  0.951\\
  & Scenario 2 & 0.0609 & 10.0416 &   0.938 & 0.2478 & 6.1634 &   0.953 \\
\multirow{2}{*}{$(\beta_a > \beta_{\hat{a}})$} 
  & Scenario 3 & 0.3603 & 7.6179   & 0.949 &0.1818 & 4.2446 &   0.954  \\
  & Scenario 4 & 0.0609 & 10.0416 &   0.938 & 0.2478 & 6.1634  &  0.953 \\
\hline \hline
\multirow{2}{*}{$\beta_a = 3, \beta_{\hat{a}} = 9$} 
     & Scenario 1 & 1.4787 & 7.7689 &    0.951  & 2.808 & 5.1064  &  0.881  \\
  & Scenario 2 &  0.0609 & 10.0416 &   0.938 & 0.2478 & 6.1634 &   0.953 \\ 
  \multirow{2}{*}{$\beta_a < \beta_{\hat{a}}$} 
   & Scenario 3 & 0.3603 & 7.6179   & 0.949 & 0.1818 & 4.2446 &   0.954\\
  & Scenario 4 & 0.0609 & 10.0416 &   0.938 &0.2478 & 6.1634  &  0.953 \\
\hline \hline
\end{tabular}
\end{tiny}
\end{table}

\begin{table}[H]
\centering
\caption{Simulation result for discrete treatment in spatial setting }
\label{Discrete spatial - distance}
\begin{tiny}
\begin{tabular}{ll|ccc|ccc}
\hline \hline
Weights &  &   \multicolumn{3}{|c|}{90\% threshold}  &  \multicolumn{3}{|c} {80\% threshold}\\
\hline \hline
\textbf{Cases} & \textbf{Scenario} & \textbf{Bias} & \textbf{RMSE} & \textbf{Coverage} & \textbf{Bias} & \textbf{RMSE} & \textbf{Coverage}\\
\hline \hline
\multirow{2}{*}{$\beta_a = 8, \beta_{\hat{a}} = 2$} 
  & Scenario 1 & -0.0173 & 0.129 &    0.945 & -0.0186 &  0.129  &  0.944\\
  & Scenario 2 & 0.0056 & 0.138 &   0.945 &0.0063 & 0.133 &   0.948 \\
\multirow{2}{*}{$(\beta_a > \beta_{\hat{a}})$} 
  & Scenario 3 & 0.0044 & 0.127   & 0.943 &0.0044 &0.127 &   0.943  \\
  & Scenario 4 & 0.0058 & 0.138  &  0.945  & 0.0063 & 0.133  &  0.948 \\
\hline \hline 
\multirow{2}{*}{$\beta_a = 3, \beta_{\hat{a}} = 9$} 
     & Scenario 1 & -0.0949 & 0.153 &    0.896  & -0.1060 & 0.160  &  0.882  \\
  & Scenario 2 &  0.0058 & 0.138 &   0.945 &0.0063 & 0.133 &   0.948 \\ 
  \multirow{2}{*}{$\beta_a < \beta_{\hat{a}}$} 
   & Scenario 3 & 0.0044 & 0.127   & 0.943 & 0.0044 &0.127 &   0.943\\
  & Scenario 4 &  0.0058 & 0.138 &   0.945 & 0.0063 & 0.133  &  0.948 \\
\hline \hline
 &  &   \multicolumn{3}{|c|}{75\% threshold}  &  \multicolumn{3}{|c} {50\% threshold}\\
\hline \hline
\multirow{2}{*}{$\beta_a = 8, \beta_{\hat{a}} = 2$} 
  & Scenario 1 & -0.0196 & 0.129 &    0.945 & -0.0286 &  0.131  &  0.946\\
  & Scenario 2 & 0.0058 & 0.134 &   0.945 &0.0023 & 0.132 &   0.944 \\
\multirow{2}{*}{$(\beta_a > \beta_{\hat{a}})$} 
  & Scenario 3 & 0.0044 & 0.127   & 0.943 &0.0044 &0.127 &   0.943  \\
  & Scenario 4 & 0.0058 & 0.134 &   0.945 &0.0023 & 0.132 &   0.944 \\
\hline \hline
\multirow{2}{*}{$\beta_a = 3, \beta_{\hat{a}} = 9$} 
     & Scenario 1 & -0.115 & 0.163 &    0.879  & -0.182 & 0.208  &  0.775  \\
  & Scenario 2 &  0.0058 & 0.134 &   0.945 & 0.0023 & 0.132 &   0.944 \\ 
  \multirow{2}{*}{$\beta_a < \beta_{\hat{a}}$} 
   & Scenario 3 & 0.0044 & 0.127   & 0.943 & 0.0044 &0.127 &   0.943\\
  & Scenario 4 & 0.0058 & 0.134  &  0.945 &-0.0023 & 0.132  &  0.944 \\
\hline \hline
\end{tabular}
\end{tiny}
\end{table}

\begin{table}[H]
\centering
\caption{Simulation result for continuous treatment in spatial setting }
\label{Continuous spatial - Distance}
\begin{tiny}
\begin{tabular}{ll|ccc|ccc}
\hline \hline
Weights &  &   \multicolumn{3}{|c|}{90\% threshold}  &  \multicolumn{3}{|c} {80\% threshold}\\
\hline \hline
\textbf{Cases} & \textbf{Scenario} & \textbf{Bias} & \textbf{RMSE} & \textbf{Coverage} & \textbf{Bias} & \textbf{RMSE} & \textbf{Coverage}\\
\hline \hline
\multirow{2}{*}{$\beta_a = 8, \beta_{\hat{a}} = 2$} 
  & Scenario 1 & -0.0143 & 0.0834 &    0.932 & -0.0152 &  0.0847  &  0.935\\
  & Scenario 2 & 0.0063 & 0.0852 &   0.933 &0.0057 & 0.0842 &   0.932 \\
\multirow{2}{*}{$(\beta_a > \beta_{\hat{a}})$} 
  & Scenario 3 & 0.0054 & 0.0827   & 0.936 &0.0054 &0.0827 &   0.936  \\
  & Scenario 4 & 0.0063 & 0.0852 &   0.933 & 0.0057 & 0.0842  &  0.932 \\
\hline \hline 
\multirow{2}{*}{$\beta_a = 3, \beta_{\hat{a}} = 9$} 
     & Scenario 1 & -0.0962 & 0.121 &    0.825  & -0.1170 & 0.135  &  0.800  \\
  & Scenario 2 &  0.0063 & 0.0852 &   0.933 &0.0057 & 0.0842 &   0.932\\ 
  \multirow{2}{*}{$\beta_a < \beta_{\hat{a}}$} 
   & Scenario 3 & 0.0054 & 0.0827   & 0.936 & 0.0054 &0.0827 &   0.936\\
  & Scenario 4 & 0.0063 & 0.0852 &   0.933 &0.0057 & 0.0842 &   0.932 \\
\hline \hline
 &  &   \multicolumn{3}{|c|}{75\% threshold}  &  \multicolumn{3}{|c} {50\% threshold}\\
\hline \hline
\multirow{2}{*}{$\beta_a = 8, \beta_{\hat{a}} = 2$} 
  & Scenario 1 & -0.0151 & 0.0852 &    0.938 & -0.0238 & 0.0884 &    0.922\\
  & Scenario 2 & 0.0055 & 0.0848 &   0.940 & 0.0033 & 0.0843 &   0.936 \\
\multirow{2}{*}{$(\beta_a > \beta_{\hat{a}})$} 
  & Scenario 3 & 0.0054 & 0.0827   & 0.936 &0.0054 & 0.0827   & 0.936  \\
  & Scenario 4 & 0.0055 & 0.0848 &   0.940 &0.0033 & 0.0843 &   0.936 \\
\hline \hline
\multirow{2}{*}{$\beta_a = 3, \beta_{\hat{a}} = 9$} 
     & Scenario 1 & -0.1230 & 0.1410 &    0.787  & -0.1970 & 0.207  &  0.642  \\
  & Scenario 2 &   0.0055 & 0.0848 &   0.940 &0.0033 & 0.0843 &   0.936  \\ 
  \multirow{2}{*}{$\beta_a < \beta_{\hat{a}}$} 
   & Scenario 3 & 0.0054 & 0.0827   & 0.936 & 0.0054 & 0.0827   & 0.936\\
  & Scenario 4 & 0.0055 & 0.0848 &   0.940 &0.0033 & 0.0843 &   0.936  \\
\hline \hline
\end{tabular}
\end{tiny}
\end{table}

\subsection{Plot of the simulated confounder and treatment for normal, Poisson and binomial distributions with their corresponding outcomes}
\label{app: B3}
\begin{figure}[H]
  \centering
\includegraphics[width=1.0\textwidth]{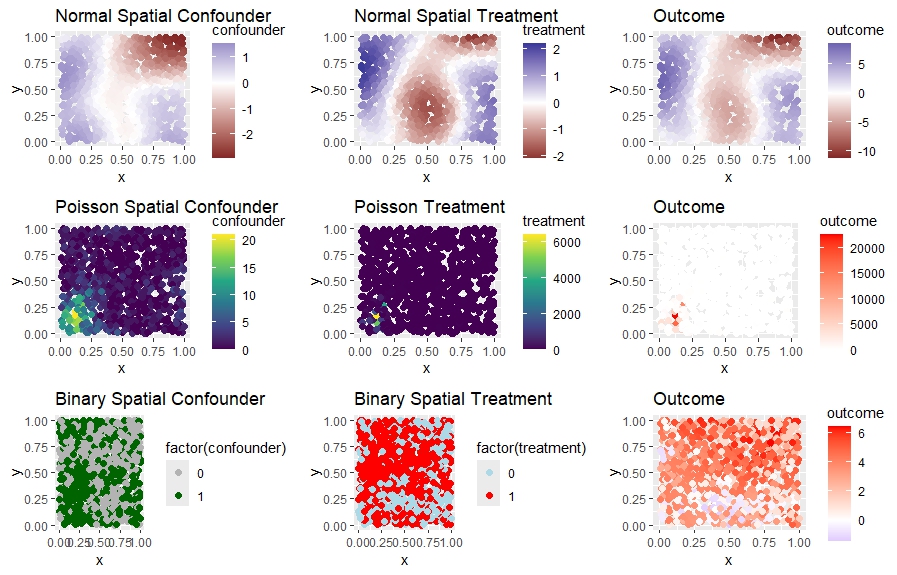}
   \caption{Normal, Poisson and  Binary spatial treatment and spatial confounders with corresponding outcomes}
  \label{Confounder and treaments} 
\end{figure}

\subsection{The bias box plots for binary treatment and confounder, normal treatment and confounder and Poisson treatment and confounder}
\label{app: B4}
\begin{figure}[H]
  \centering
\includegraphics[width=1.1\textwidth]{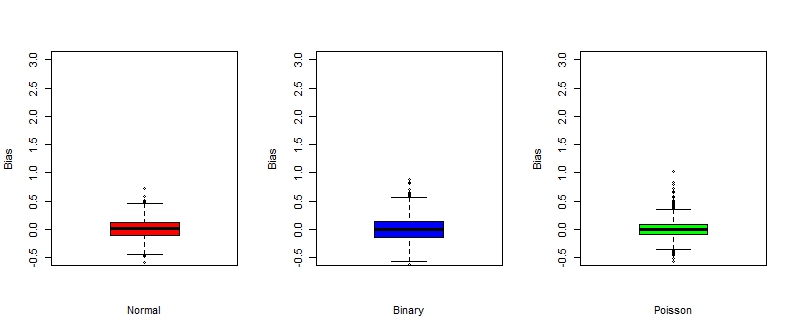}
   \caption{Bias for normal, Poisson, binary treatment and confounders}
  \label{Treatment and Confounders Bias with RMSE} 
\end{figure}

\bibliographystyle{apalike}
\bibliography{sources}

\end{document}